\documentclass[11pt]{article}

\usepackage{fullpage}
\usepackage{amssymb,amsmath}
\usepackage{graphicx, epsfig}

\def\idrm#1{\ensuremath{\mathrm{#1}}}


\def\floor#1{\lfloor #1 \rfloor}
\def\ceil#1{\lceil #1 \rceil}

\newtheorem{theorem}{Theorem}
\newtheorem{lemma}{Lemma}

\newtheorem{corollary}{Corollary}
\newtheorem{problem}{Problem}
\newenvironment{proof}{\trivlist\item[]\emph{Proof}:}%
{\unskip\nobreak\hskip 1em plus 1fil\nobreak$\Box$
\parfillskip=0pt%
\endtrivlist}

\newenvironment{itemize*}%
  {\begin{itemize}%
    \setlength{\itemsep}{0pt}%
    \setlength{\parskip}{0pt}%
    \setlength{\parsep}{0pt}%
    \setlength{\topsep}{0pt}%
    \setlength{\partopsep}{0pt}%
  }%
  {\end{itemize}}%

\newcommand{\halfleftsect}[2]{(#1,#2]}
\newcommand{\cL}{{\cal L}}
\newcommand{\cA}{{\cal A}}
\newcommand{\cT}{{\cal T}}
\newcommand{\cP}{{\cal P}}
\newcommand{\cD}{{\cal D}}
\newcommand{\cM}{{\cal M}}
\newcommand{\cB}{{\cal B}}

\newcommand{\todo}[1]{ }

\newcommand{\prank}{\idrm{prank}}
\newcommand{\docc}{\idrm{docc}}
\newcommand{\freq}{\idrm{freq}}
\newcommand{\mindist}{\idrm{mindist}}
\newcommand{\seq}{\idrm{seq}}
\newcommand{\rel}{\idrm{rel}}

\newcommand{\eps}{\varepsilon}

\pagestyle{plain}

\begin{document}

\title{\bf Top-$K$ Color Queries for Document Retrieval}
\author{
Marek Karpinski\thanks{Department of Computer Science, 
University of Bonn. Supported in part by DFG grants and the Hausdorff Center 
excellence grant EXC 59-1. 
Email {\tt marek@cs.uni-bonn.de}.}
\and
Yakov Nekrich\thanks{Department of Computer Science, 
University of Bonn. Supported in part by the Hausdorff Center excellence
 grant EXC 59-1. 
Email {\tt yasha@cs.uni-bonn.de}.}
}
\date{}
\maketitle
\begin{abstract}
In this paper we describe a new efficient (in fact optimal) 
 data structure for the  {\em top-$K$ color problem}.
Each element of an array $A$ is assigned a color $c$ with priority $p(c)$. 
For a  query range $[a,b]$ and  a value $K$, we have to report $K$ 
 colors with the highest priorities among all colors 
that occur in  $A[a..b]$, sorted in reverse order by their priorities.
We show that such queries can be answered in $O(K)$ time using an 
$O(N\log \sigma )$ bits data structure, where $N$ is the number of elements in 
the array and $\sigma$ is the number of colors. 
Thus our data structure is asymptotically optimal with respect to  the 
worst-case 
 query time and  space.  
As an immediate application of our results, we obtain optimal time 
solutions 
for several document retrieval problems. The method of the paper could be also
 of independent interest. 

\end{abstract}


\todo{ Bad notation: $P$ -pattern and sets $P(j_1,j_2)$, $d$ document and 
the number of descendants, and $d$-dimensional points in the Intro, $p(c)$ - priority and $p$ as index}
\todo{sorted and unsorted queries- define or remove where they are mentioned} 
 \section{Introduction}
 In this paper we study a variant of the well-known  color reporting problem.
 Each entry of an array $A$ is assigned a color $c\in C$ with priority $p(c)$. 
 For a  query range $Q=[a,b]$ and an integer $K$, the data structure 
 reports $K$ distinct colors with  highest priorities among all colors that 
 occur in $Q$. Colors are reported in the reverse order of their priorities. 
 In the online version of this problem, we report all colors that occur 
 in $A[a..b]$ in decreasing order until all colors are reported or the 
 query is terminated by the user. 

 Using an $O(N\log\sigma)$ bits data structure, we can answer such queries 
 in $O(K)$ time, where $K$ is the number of reported colors and 
 $\sigma$ is the total number of distinct colors in $A$. Thus our 
 data structure achieves worst-case optimal query time and space usage.
 Even for a simpler problem of reporting all distinct colors  in $A[a..b]$ 
 in arbitrary order, 
 the best previously known optimal time data structure  uses $O(N\log N)$ bits.

 The study of this problem is motivated by its applications to document 
 retrieval and search engines. 
It is known~\cite{M02} that we can report  all documents that 
 contain a pattern $P$ by reporting all distinct colors that occur 
 in a range $A[a_p..b_p]$ of the document array. 
In many cases, we want to output 
only most important or most relevant documents in sorted order 
starting with the most important (relevant) documents.
The well known example of such scenario are search engines: an answer 
to a query is a sequence of documents output in the reverse order of their 
relevance. 
Static  ranking of documents based on 
 e.g. their links with other documents, such as PageRank~\cite{PBMW99} and 
 HITS~\cite{Kl99} is an important part of estimating document relevance. 
Thus it may be 
beneficial to generate the list of $K$ most highly ranked 
documents that contain a specified 
pattern $P$,  sorted by their ranks. 
The parameter $K$ is sometimes not known in advance and documents 
must be reported in order of their ranks until the procedure is terminated. 
%
In such situations our data structure gives us an optimal time 
solution.
Our result can  be also applied to other document retrieval problems.



   
 {\bf Previous and Related Work.}
 Colored range searching is a widely studied problem with various applications. 
 In computational geometry and data structures, 
 the following variant of the problem is considered. 
 A set of   points is stored in a data structure, 
 so that for any 
 rectangle $Q$ distinct colors of all points in $Q$ must be reported  
or the number 
 of distinct colors must be counted.   
 Such queries can be supported efficiently for $l \leq 3$ 
 dimensions~\cite{JL93,GJS95,AGM02}. 
 Several related problems, in which distinct colors of geometric objects 
 must be reported or counted were also studied.

 In the document listing problem, a set of documents $d_1,\ldots, d_s$ 
 with total length $N$ must be stored in a data structure, so that 
 for any pattern $P$ all documents that contain $P$ must be reported.  
 The total number of occurrences of $P$ may significantly exceed 
 the number of documents that contain $P$.
 Matias et al.~\cite{MMSZ98} described the first data structure 
 for this problems; their data structure  answers document listing  queries 
 in $O(|P|\log s +\docc)$ time, where $|P|$ is the length of $P$ and
 $\docc$ is the number of  reported documents.
 Muthukrishnan~\cite{M02} showed that several document retrieval problems 
 can be reduced to colored searching problems. In~\cite{M02} the author 
describes 
 an $O(N\log N)$ bits data structure that answers  document listing  query 
  in optimal $O(|P|+\docc)$  time. 
The data structures of~\cite{S07,VM07} further improve the space usage 
by storing the documents in compressed form; however, their solutions 
do not achieve optimal query time: it takes $O(\log^{\eps} N)$ time~\cite{S07}
or $O(\log s)$ time to report each document. 
The solution of Gagie et al.~\cite{GPT09}, based on the wavelet tree, uses 
$N\log s$ bits but also needs suboptimal $(|P|+\docc\log s)$ bits to 
answer a query.

The total number of documents that contain $P$ can  be very large and we 
may be interested in reporting only a subset of documents that contain 
the pattern $P$. In~\cite{M02}, two such problems are considered. 
In the $K$-mine problem, documents that contain at least $K$ occurrences 
of pattern $P$ must be reported. 
In the $K$-repeats two problems, we report all documents $d$, such that 
the minimal distance between two occurrences of the pattern $P$ in $d$ 
is at most $K$.  
In~\cite{M02}, $O(N\log^2 N)$ bit data structures that solve both problems 
in $O(|P|+\docc)$ time are described. 

Instead of reporting all documents whose relevance score exceeds a certain 
threshold, we often want to report $K$ most important or 
most relevant documents in sorted order. 
Recently, Hon et. al~\cite{HSV09} addressed this problem 
and described an efficient framework for 
reporting the $K$ most relevant documents with respect to 
the query pattern $P$. 
Their data structure uses linear space (i.e., $O(N\log N)$ bits) and can 
report $K$ most relevant 
documents in $O(|P|+K\log K)$ time. They also describe a  compressed 
data structure 
that  supports queries in $O(|P|+K\text{polylog} (N))$ 
time. In addition to static document ranks, the framework of~\cite{HSV09} 
also supports  other  relevance metrics


The problem of storing an array $A$, so that for any $a<b$ 
all elements in $A[a]A[a+1]\ldots A[b]$ can be output in sorted order 
was studied by Brodal et al~\cite{BFGO09}. In~\cite{BFGO09} the authors 
obtained an $O(N \log N)$ bits and optimal $O(|b-a+1|)$ time solution for 
this problem. 
We observe that in this paper a different problem is studied: 
if array $A$ contains colors and some color $c$ occurs $n_c$ times 
in $A[a]A[a+1]\ldots A[b]$ then the data structure of~\cite{BFGO09} 
reports this color $n_c$ times. Our data structure returns the color 
$c$ only once in this situation. 
The problem of ranked reporting was also considered by Grossi and
 Bialynicka-Birula~\cite{BG07}. They describe a general technique for 
adding rank information to geometric objects so that  answers to 
range reporting queries can be ordered by rank. However, any data structure  
based on their method uses super-linear 
space and 
requires  poly-logarithmic time to answer queries. 
For instance, a reduction of color queries to three-sided queries 
and their  method result in an $O(N\log^{2+\eps}N)$ bit data structure 
that answers  queries in $O(\log^2 N + K)$ time.

{\bf Our Results.}
We develop a new explicit technique for recursive, exponentially 
decreasing size subarrays combined with a new method for storing 
certain, pre-defined query answers. 
 We show that an array $A$ can be 
stored in an $O(N\log \sigma)$ bits data structure so that 
for any two indexes $a<b$ and for any integer $K$, 
 $K$ distinct colors with  highest priorities among all colors that 
 occur in $A[a..b]$  can be reported in optimal $O(K)$ time. 
In fact, it is not necessary to know  $K$ in advance: we can 
report colors that occur in $A[a..b]$ in the reverse order of their 
priorities until all colors are reported or the  procedure 
is terminated by the user.  Our method depends on transforming 
a data structure with $O(N^{1/f}+K)$, $f>1$, query time into 
a data structure with optimal query time; two crucial components 
of this transformation are an efficient method for obtaining 
solutions for pre-defined intervals and recursively 
defined  data structures with exponentially decreasing number of 
elements. 

Our data structure leads to optimal time solutions 
 for document listing 
in situations when every document is assigned  a static rank.
\begin{problem}[Ranked Document Listing Problem]
Documents $d_1,\ldots, d_s$ are stored in a data structure, so that 
for any pattern $P$ and any $K$ we must return $K$ most highly ranked 
documents that contain $P$ ordered by their rank.
\end{problem}
The data structure of~\cite{HSV09} uses $O(N\log N)$ bits and solves 
this problem in $O(|P|+K\log K)$ time, 
where $N$ is the total length of all documents.
The compressed data structure of~\cite{HSV09} uses  
$CSA|+o(N)+s\log (N/s)$ bits, but requires $O(|P|+K\log^{3+\eps} N)$ time 
to answer a query, where $|CSA|$ denotes the number of bits necessary
 to store compressed suffix array for all documents.
We can solve the ranked document listing problem in optimal $O(|P|+K)$ time 
using  worst-case optimal $O(N\log s)$ bits of space (in addition to the 
suffix array). 
Even for the general document listing problem, the previous optimal time 
data structure~\cite{M02} needs $O(N\log N)$ additional bits of space.

\begin{problem}[Ranked $t$-Mine Problem]
Documents $d_1,\ldots, d_s$ are stored in a data structure, so that 
for any pattern $P$ and any $K$, $t$ we must return $K$ most highly ranked 
documents that contain $P$ at least $t$ times ordered by their rank.
\end{problem}
We can solve the ranked $t$-mine problem in $O(|P|+K)$ time by a data 
structure that uses   $O(N\log s)$ words of $\log N$ bits. 


We can also combine our data structure with the framework of~\cite{HSV09}
and use a number of other relevance metrics. Let $S(d,P)$ denote the set of 
all positions in a document $d$, where $P$ matches. The framework 
of~\cite{HSV09} supports relevance metrics that depend on $S(d,P)$.
We will denote such a metric by $\rel(d,P)$.  
Examples of  metrics $\rel(d,P)$ 
are $\freq$, the frequency of occurrence of $P$ 
in a document, and  $\mindist$, the minimal distance between two occurrences 
of $P$ in a document.  
\begin{problem}[Most Relevant Documents Problem]
Documents $d_1,\ldots, d_s$ are stored in a data structure, so that 
for any pattern $P$ and any $K$ we must return $K$ most relevant  
documents with respect to a metric $\rel(d,P)$ ordered by $\rel(d,P)$.
\end{problem} 
The $O(N\log N)$ bit data structure of~\cite{HSV09} supports 
most relevant documents queries in $O(|P|+K\log K)$ time. 
For some relevance metrics, 
the compressed data structure of~\cite{HSV09} uses $2|CSA|+ s\log(N/s)+o(N)$ 
bits, but needs $O(|P|+K\text{polylog} (N))$ time to answer queries. 
For instance, if $\freq$ is chosen as the relevance metrics, then 
queries can be answered in $O(|P|+K\log^{4+\eps} N)$ time

We show that using a linear space data structure, 
we can report $K$ documents that 
contain a pattern $P$ and are most relevant with respect to $P$ 
in $O(|P|+K\log |P| )$ time. This is an improvement over the first
 result of~\cite{HSV09} for the case when $|P|=o(K)$. 
Moreover, if $|P|=\log^{O(1)} N$, then our data structure supports most 
relevant document queries in optimal  $O(|P|+K)$ time. 
For instance, suppose that  $\freq$ is used as relevance metric. 
Then for any pattern $P$ such that  $|P|=\log^{O(1)} N$, 
we can report $K$ documents in which  $P$ occurs most frequently in 
optimal $O(|P|+K)$ time. 



\todo{
We can also extend this

 For simplicity, we sometimes do not distinguish between elements and their 
 colors.

 A block consists of $B$ words. In this paper we also assume that 
 each word consists of $\log N$ bits. 
 Sometimes we will specify the space usage of our data structures in bits. 
 We observe, however, that if we describe an external memory 
  data structure that uses 
 $O(s(N))$ bits, then this data structure 
 can be packed into $O((s(n)/(B\log N)))$ block of space. 

 We distinguish between two variants of the top-$K$ color problem. 
 In the sorted top-$K$ color problem, $K$ colors with highest priorities 
 must be reported in the descending order of their priorities.
 In the unsorted top-$K$ color problem, $K$ colors with highest priorities 
 are reported in arbitrary order. 
 In this model we obtain the following results:\\
 (i) There exists an external memory data structure that uses 
 $O((N/B)\log\log N)$ blocks of space and answers unsorted top-$K$ queries 
 with  $O(K/B)$ I/O operations.\\
 (ii) There exists an external memory data structure that uses 
 $O((N/B)\log\log N)$ blocks of space and answers sorted top-$K$ queries 
 with $O((K/B)\log_B K)$ I/O operations.\\
}

{\bf Overview} In section~\ref{sec:colrep} we recall the results 
for standard one-dimensional color reporting and counting problems. 
In section~\ref{sec:nlognspace} we present a simple data structure 
that finds the (unsorted) 
list of $K$ colors with highest priorities in $O(K+\log^2 N)$ time
and uses $O(N\log^2 N)$ bits of space. 
Essentially, our data structure is a wavelet tree~\cite{GGV03} with 
secondary structures for color reporting and counting stored in its nodes.

In sections~\ref{sec:linspace} and~\ref{sec:opttime} 
we describe a new approach that 
enables us to achieve optimal time and almost optimal ($O(N\log N)$ bits) 
space. Our first idea, described in section~\ref{sec:linspace}, 
is sparsification: we store data structures only 
for nodes situated on a constant number of levels. This allows us to achieve 
linear space because each element is stored in only a constant number of data 
structures. On the other hand, our new
 search procedure must visit a much larger 
number of nodes; therefore, the search time grows to $O(N^{1/f}+K)$ for 
a constant $f$.  In section~\ref{sec:linspace} we show how the search time 
can be decreased without increasing space. 
First, we describe how we can obtain solutions for some pre-defined queries
using linear space. We recursively combine this method with data structures of 
section~\ref{sec:linspace}.
In section~\ref{sec:optspace} we demonstrate that the space usage can be 
further reduced to $O(N\log \sigma)$ bits; see Table~\ref{tbl:over}
Besides that, our data structure can be also extended to the external memory 
model, as shown in section~\ref{sec:extmem}.
Applications of our data structures to document retrieval problems 
are described in section~\ref{sec:docret}. 

\begin{table}
\centering
\begin{tabular}{|l|c|c|} \hline
                             & Query Time & Space Usage (in bits) \\ \hline
section~\ref{sec:nlognspace} & $O(\log^2 N + K)$ & $O(N\log^2 N)$ \\
section\ref{sec:linspace}    & $O(N^{1/f}+K)$  & $O(N\log N)$ \\
section~\ref{sec:opttime}    & $O(K)$  & $O(N\log N)$ \\
section~\ref{sec:optspace}   & $O(K)$  & $O(N\log \sigma)$\\ \hline
\end{tabular}
\caption{Overview of results in the RAM model}
\label{tbl:over}
\end{table}

Throughout this paper $A[i..j]$ denotes the subarray that 
consists of elements $A[i]A[i+1]\ldots A[j]$; $[a,b]$ denotes an interval 
that consists of all integers $x$, $a\leq x\leq b$. For simplicity, we 
sometimes do not distinguish between elements and their  colors.
Our data structures use only additions, subtractions, and standard 
bit operations. We say that a data structure with $N$ elements uses 
linear space if it can be stored in $O(N\log N)$ bits.

\section{Colored Reporting and Counting}
\label{sec:colrep}
In the color reporting problem, each element of an array $A$ is assigned 
a color $c$ from the set of colors $C$. Given a query range $[a,b]$, 
we must report all distinct colors $c_1,\ldots c_K$, such that at least 
one element colored with $c_i$, $1\leq i \leq K$, occurs in $A[a..b]$. 
In the color counting problem, we must count the number of distinct colors 
that occur in $A[a..b]$. Both problems were studied extensively; we 
refer the reader to~\cite{GJS95} for a survey of  results\footnote{Definitions 
of colored reporting and counting used in this paper are slightly more 
restrictive than the standard definitions of this problem.}.   
\begin{lemma}\label{lemma:ram}
In the RAM model, colored range reporting queries can be answered  in $O(K)$ time using an $O(N\log N)$ bits data structure. In the RAM model, the colored 
range counting problem can be solved in $O(\log N)$ 
time using an $O(N\log N)$ bits data structure.  
\end{lemma}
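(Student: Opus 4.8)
The plan is to reduce both problems to the classical ``previous occurrence'' formulation. For each index $i$ let $\mathrm{prev}[i]$ be the largest $j<i$ with $A[j]=A[i]$, and $\mathrm{prev}[i]=0$ if no such $j$ exists. The key observation is that a color $c$ occurs in $A[a..b]$ if and only if there is exactly one index $i\in[a,b]$ with $A[i]=c$ and $\mathrm{prev}[i]<a$, namely the leftmost occurrence of $c$ inside $[a,b]$; conversely any later occurrence $j$ of $c$ in $[a,b]$ satisfies $\mathrm{prev}[j]\ge a$. Hence the distinct colors of $A[a..b]$ are in bijection with $\{\,i\in[a,b]:\mathrm{prev}[i]<a\,\}$, so reporting (resp.\ counting) the distinct colors amounts to reporting (resp.\ counting) the points of the three-sided region $[a,b]\times[0,a-1]$ in the planar point set $\{(i,\mathrm{prev}[i]):1\le i\le N\}$.

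For the reporting part I would store the array $A$ explicitly (in $N\ceil{\log\sigma}\le O(N\log N)$ bits) together with a range-minimum-query structure over $\mathrm{prev}$; an RMQ structure occupies $O(N)$ bits and answers queries in $O(1)$ time. To answer a query, find the position $m$ of the minimum of $\mathrm{prev}$ in $[a,b]$; if $\mathrm{prev}[m]\ge a$ the subrange contributes no new color and we stop, otherwise we output $A[m]$ and recurse on $[a,m-1]$ and $[m+1,b]$. Every recursive call that does not terminate immediately produces one new color, so there are $O(K)$ calls in total and the running time is $O(K)$. Correctness — each color output exactly once — is immediate from the observation above, since at most one occurrence of any color has $\mathrm{prev}<a$.

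For the counting part I would instead build a wavelet tree on the sequence $\mathrm{prev}[1],\ldots,\mathrm{prev}[N]$, whose entries lie in $\{0,1,\ldots,N\}$; this costs $O(N\log N)$ bits. Then $|\{\,i\in[a,b]:\mathrm{prev}[i]<a\,\}|$ is a standard range-rank query on the wavelet tree, answered in $O(\log N)$ time. (Equivalently one may invoke any $O(\log N)$-time two-dimensional dominance-counting structure that fits in $O(N\log N)$ bits.)

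I do not expect a real obstacle here — this is essentially the Muthukrishnan reduction — but the one point that needs care is the space accounting for the counting query: a naive range-tree would use $O(N\log^2 N)$ bits, so it matters that we exploit the fact that the first coordinates form a permutation of $[1,N]$ and use the wavelet tree. The remaining ingredients — the $\mathrm{prev}$ reduction, the RMQ-driven recursion, and the $O(1)$ / $O(\log N)$ query times of the building blocks — are routine.
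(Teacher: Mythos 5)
Your proposal is correct and follows essentially the same route as the paper: the $\mathrm{prev}$-array bijection is exactly the reduction of colored reporting/counting to three-sided queries on an $N\times N$ grid that the paper invokes from Gupta--Janardan--Smid, and your RMQ-driven recursion and wavelet-tree dominance counting are standard $O(K)$-time and $O(\log N)$-time, $O(N\log N)$-bit instantiations of the three-sided reporting/counting structures the paper simply cites (Muthukrishnan/Alstrup--Brodal--Rauhe for reporting, Chazelle for counting). One small bookkeeping point: the recursion compares $\mathrm{prev}[m]$ with $a$, so you must also store the $\mathrm{prev}$ values themselves (another $N\log N$ bits, still within the stated budget), since a succinct RMQ structure alone returns only the position of the minimum, not its value.
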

\begin{proof}
As shown in~\cite{GJS95}, the one-dimensional colored reporting (counting) 
for an array with $N$ elements can be reduced to the standard 
three-sided reporting (resp.\ counting) on $N\times N$ grid, 
i.e. to the problem of storing a set of two-dimensional points
whose coordinates belong to an integer interval $[1,N]$ in a data structure,
so that all points that belong to 
a query range of the form $[x_1,x_2]\times \halfleftsect{y_1}{+\infty}$
can be reported (counted).
Three-sided reporting queries on  the $N\times N$ grid 
can be answered in $O(K)$ time in the
 RAM model using an $O(N\log N)$ bits data structure~\cite{ABR00,M02}. 
Three-sided counting queries can be answered in $O(\log N)$ time using 
a linear space data structure~\cite{Ch88}.
\end{proof}
\begin{lemma}~\label{lemma:ext}
In the external memory model, colored range reporting queries can be answered 
 in $O(\log\log_B N + K/B)$ I/Os using an $O(N\log N)$ bits data 
structure. In the external memory  model, 
the colored range counting problem can be solved in $O(\log N)$ 
I/Os  using an $O(N\log N)$ bits data structure.  
\end{lemma}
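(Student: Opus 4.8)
The plan is to reuse the proof of Lemma~\ref{lemma:ram}: the reduction of~\cite{GJS95} from one‑dimensional colored reporting (counting) to three‑sided reporting (counting) on an $N\times N$ grid is purely a preprocessing transformation, so it carries over to the external memory model unchanged, and one colored query is answered by a single three‑sided query. Recall the reduction: with every index $i$ we associate the point $(i,\mathrm{prev}(i))$, where $\mathrm{prev}(i)$ is the largest $j<i$ with $A[j]=A[i]$, and $\mathrm{prev}(i)=0$ if $A[i]$ does not occur in $A[1..i-1]$. All coordinates lie in $[0,N]$, so the point set lives on an $N\times N$ grid. A color $c$ occurs in $A[a..b]$ if and only if exactly one index $i$ satisfies $A[i]=c$, $a\le i\le b$, and $\mathrm{prev}(i)\le a-1$ (namely the leftmost occurrence of $c$ that is at least $a$: its previous occurrence is below $a$, while every later occurrence in $[a,b]$ has $\mathrm{prev}(i)\ge a$). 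Hence the distinct colors of $A[a..b]$ are in bijection with the points in the three‑sided range $[a,b]\times[0,a-1]$, and the number $K$ of reported colors equals the number of reported points; this step costs no I/Os beyond the underlying three‑sided query.

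It then suffices to store the grid point set in external memory so that three‑sided queries are answered within the claimed bounds. For reporting, I would use a linear‑space structure — $O(N/B)$ blocks, i.e.\ $O(N\log N)$ bits — that answers a three‑sided query on an $N\times N$ grid in $O(\log\log_B N + K/B)$ I/Os. Here it is essential that, after the reduction, the points lie in rank space: a generic external priority search tree gives only $O(\log_B N + K/B)$ I/Os, and the improved additive term relies on the refinements available when the coordinates come from a bounded universe — replacing the navigation part of the structure by a faster predecessor search over a sampled set of $x$‑coordinates and handling the $B^{O(1)}$ points between consecutive sampled columns with a small auxiliary structure. For counting, I would store a linear‑space external structure supporting three‑sided counting in $O(\log N)$ I/Os; for instance, write the count over $[a,b]\times[0,a-1]$ as a difference of two dominance counts and answer each with a wavelet‑tree‑type structure built on the sequence of $y$‑ranks of the points sorted by $x$, which occupies $O(N\log N)$ bits and answers a rank query in $O(\log N)$ I/Os. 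Combining either structure with the reduction proves the corresponding half of the lemma.

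The main obstacle is the three‑sided reporting structure: pushing the additive cost down to $O(\log\log_B N)$ while keeping the space linear and preserving the output‑sensitive $O(K/B)$ term is exactly what forces us to work in rank space, which is why the reduction is set up to produce a point set on $[0,N]\times[0,N]$ rather than an arbitrary planar set. The counting bound ($O(\log N)$ I/Os, deliberately loose rather than $O(\log_B N)$) follows from any of several standard constructions, and the reduction itself is routine and I/O‑free.
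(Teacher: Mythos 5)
Your proposal is correct and follows essentially the same route as the paper: the same reduction of one-dimensional colored reporting/counting to three-sided queries on an $N\times N$ grid, combined with an external-memory three-sided reporting structure achieving $O(\log\log_B N + K/B)$ I/Os in $O(N\log N)$ bits (the paper simply cites Nekrich's grid range-reporting result for this) and a linear-space $O(\log N)$-I/O counting structure (the paper extends Chazelle's structure, while your wavelet-tree formulation is the same machinery in different clothing).
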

\begin{proof}
We use the same reduction to three-sided reporting (counting) as in 
Lemma~\ref{lemma:ram}. There exists an $O(N\log N)$ bits data structure that 
supports  three-sided reporting queries in $O(\log \log_B n + K/B)$ I/O 
operations~\cite{N07}. The result of~\cite{Ch88} can be straightforwardly 
extended to the external memory model.
\end{proof}

\section{An $O(N\log N)$ Space Data Structure}
\label{sec:nlognspace}



In this section we consider a simple problem: $K$ colors with highest 
priorities that occur in 
the query interval $[a,b]$ must be reported in an  arbitrary order.
The data structure described in this section is based on recursive partitioning of the set of colors based on their priorities. Thus our approach in this 
section is similar to  the idea of the 
wavelet tree. 
Every node of a binary tree $T$ is associated with a set of colors $C_v$ and
an array $A_v$. If $v$ is the root node, then $A_v=A$ and $C_v=C$. 
When $A_v$ and $C_v$ for some node $v$ of $T$ are known, the arrays 
for the children of $v$ can be constructed.  
The set of colors $C_v$ is divided into two sets $C_0$ and $C_1$ 
that contain equal number of elements\footnote{We assume that $\sigma=|C|$ is a power of two.} and all colors in $C_0$ have  smaller 
priorities than any color in $C_1$.  
We denote by $N_v$ the total number of elements in $A_v$.
We store an additional array $\cB_v$ of 
$N_v$ bits; 
the $i$-th bit of $\cB_v$ equals to $1$ if and only if the color $A_v[i]$ belongs 
to $C_1$. If $u$ and $w$ are the right and left children of $v$, then 
we set $C_w=C_0$ and $C_u=C_1$. The array  $A_u$ ($A_w$) contains all 
elements of $A_v$ whose colors belong to $C_u$ ($C_w$): if $A_v[i]$ belongs 
to $C_u$ and there are $l_i$ indexes $j$, such that $\cB_v[j]=0$ and $j\leq i$, 
then $A_v[i]$ is stored at position $l_i$ in $A_u$; if $A_v[i]$ belongs 
to $C_w$ and there are $l_i$ indexes $j$, such that $\cB_v[j]=1$ and $j\leq i$, 
then $A_v[i]$ is stored at position $l_i$ in $A_w$.  
Every array $\cB_v$ is augmented with the rank/select data structure that 
enables us to count the number of $1$'s or $0$'s that occur in 
$\cB_v[1..j]$ for any $j\leq N_v$. 
Using $\cB_v$ we can count the number of elements in $A_v[1..j]$ whose 
colors belong to $C_u$ or $C_w$.

Furthermore, we also store  data structures $COUNT_v$ and $REP_v$ in each 
node $v$. The data structures $COUNT_v$  and 
$REP_v$ support color counting and color reporting queries on $A_v$.
A tree $T$ with arrays $\cB_v$ is the standard wavelet tree. Thus our 
construction can be viewed as a wavelet tree with auxiliary data structures 
for color reporting and color counting stored at its nodes.
The height of $T$ is $\log\sigma\leq \log N$; hence, every element is stored 
in $\log\sigma$ secondary data structures.

We will say that the interval $[a_v,b_v]$ corresponds to an interval 
$[a,b]$ in a node $v$ if all elements of $A[a..b]$ that belong to $C_v$ 
appear in $A_v[a_v.. b_v]$ in the same order as in $A[a..b]$. 
If we know $a_v$ and $b_v$ for a node $v$, then $a_u$ and $b_u$ for the 
right child $u$ of $v$ can be found using $\cB_v$. We set $a_u$ to the number 
of $1$'s in  $A_v[1..a_v]$; if $\cB[a_v]=0$, then $a_u$ is incremented by $1$. 
We set $b_v$ to the number of $1$'s in  $A_v[1..b_v]$. 
Values of $a_w$ and $b_w$ for the left child $w$ can be found in a symmetric 
way.

We can report the top $K$ colors in the interval $[a,b]$ 
using the algorithm that visits the sequence of nodes starting 
at the root of $T$. In every visited node we proceed as follows. 
Initially, we set $a_v=a$ and  $b_v=b$  for the root node $v$.\\
1. We use $\cB_v$ to find $[a_u,b_u]$ that corresponds to $[a,b]$ 
in  the right child $u$ 
of $v$. \\ 
2. We visit the node $u$ and count the number $m_u$ of distinct colors 
in $A_u[a_u..b_u]$. Obviously, $m_u$ equals to the number of distinct 
colors from  $C_u$ that occur in $A[a..b]$. \\
3. If $m_u\geq K$, we report the top $K$ colors in $A_u[a_u..b_u]$
using the same procedure (i.e., we set $v=u$ and return to step 1). \\
4. If $m_u< K$, we report all colors in $A_u[a_u..b_u]$ using $REP_u$.\\ 
5. Then, we report the top $K-m_u$ colors in the left child $w$ of $v$:
We use $\cB_v$ to find $a_w$ and $b_w$, where $[a_w,b_w]$ 
corresponds to $[a,b]$ in $w$.
Then, we set $K=K-m_u$, $v=w$, and return to step 1. \\
The total number of visited nodes is $O(\log N)$. In every node
we answer at most one color counting query and at most one color reporting 
query. By Lemma~\ref{lemma:ram}, 
the counting query can be answered in $O(\log N)$ time.
The color reporting query in a node $v$ can be answered 
in $O( K'_v)$ time, 
where $K'_v$ denotes the number of colors reported in the node $v$.
Thus a query can be answered in $O(\log^2 N  + K)$ time.
\begin{lemma}\label{lemma:nlogn}
There exists an $O(N\log^2 N)$ bits data structure that outputs an unsorted 
list of  top-$K$ colors in a query interval $[a,b]$ in $O(\log^2 N + K)$ time. 
\end{lemma}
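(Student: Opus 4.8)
The plan is to assemble the three ingredients that the construction of this section has already prepared: a space bound for the augmented wavelet tree $T$, a correctness proof for the recursive top-$K$ procedure, and the running-time estimate sketched in the text.

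For the space bound I would argue one level of $T$ at a time. The tree has height $\log\sigma\le\log N$, and on any fixed level the arrays $A_v$ form a partition of $A$, so $\sum_v N_v=N$ over the nodes $v$ of that level. Hence the bit-vectors $\cB_v$ on a level, together with their rank/select indexes, occupy $O(N)$ bits, and the secondary structures $COUNT_v$ and $REP_v$ occupy $\sum_v O(N_v\log N)=O(N\log N)$ bits by Lemma~\ref{lemma:ram}. Summing over the $O(\log N)$ levels gives $O(N\log^2 N)$ bits in total.

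For correctness I would make explicit the invariant that the procedure maintains: upon entering a node $v$ with an interval $[a_v,b_v]$ that corresponds to $[a,b]$ and with current parameter $K$, it reports exactly the $\min(K,m_v)$ highest-priority colors of $C_v$ that occur in $A[a..b]$, where $m_v$ denotes the number of such colors. At a leaf $|C_v|=1$ and the claim is immediate. At an internal node $v$ with right child $u$ and left child $w$, the sets $C_u,C_w$ are disjoint with union $C_v$, and every color of $C_u$ has higher priority than every color of $C_w$; so the $\min(K,m_v)$ top colors of $C_v$ in $A[a..b]$ are the top $K$ colors of $C_u$ if $m_u\ge K$, and otherwise all $m_u$ colors of $C_u$ together with the top $K-m_u$ colors of $C_w$ --- which is exactly the case distinction performed in steps 3--5. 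Applying the invariant at the root, where $C_v=C$ and $[a_v,b_v]=[a,b]$, yields the lemma. The one point that requires care here is the index bookkeeping: one must check that passing from $[a_v,b_v]$ to $[a_u,b_u]$ and $[a_w,b_w]$ via rank queries on $\cB_v$ preserves the ``corresponds to $[a,b]$'' property, and that $COUNT_u$ on $[a_u,b_u]$ therefore returns $m_u$, the number of colors of $C_u$ occurring in $A[a..b]$; both are the standard wavelet-tree identities already written out for $a_u$ and $b_u$, so this is a matter of precision rather than a genuine obstacle.

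For the time bound, note that the recursion does not branch: in step 3 it descends into $u$ and abandons $w$, while in steps 4--5 it disposes of $u$ with a single call to $REP_u$ and then descends into $w$. It therefore follows one path of length $O(\log N)$. Each node on the path costs one color counting query, which takes $O(\log N)$ time by Lemma~\ref{lemma:ram}, plus at most one color reporting query; since the color classes reported by the various $REP$ calls are pairwise disjoint subsets of the final answer, all these reporting calls together take $O(K)$ time. Adding up gives the stated $O(\log^2 N+K)$ query time.
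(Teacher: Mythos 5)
Your proposal is correct and follows essentially the same route as the paper: the same priority-ordered wavelet tree with $COUNT_v$/$REP_v$ structures per node, the same per-level space accounting summing to $O(N\log^2 N)$ bits, and the same single-path descent doing one $O(\log N)$ counting query per level plus output-sensitive reporting calls whose outputs are disjoint parts of the answer, giving $O(\log^2 N + K)$. The only difference is that you make the correctness invariant and the leaf base case explicit, which the paper leaves implicit.
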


\section{A Linear  Space Data Structure}
\label{sec:linspace}
In this section we describe a linear space  data structure that enables us to
 answer queries in $O(N^{1/f} + K)$ time. Our main idea is to store 
reporting and counting data structures only in selected nodes of the tree $T$,
so that each element is stored only in a constant number of data structures. 

We say that a  node $v$ is on level $x$ if $v$ has $x$ ancestors.
We say that a node $v$ is an \emph{important} node if $v$ 
is situated on  $i\floor{(1/f)\log N}$-th level for $i=0,1,\ldots, f$ 
and a constant $f$,  or $v$ is a leaf node. 
Instead of storing the array $A_v$ and the auxiliary data 
structures $REP_v$ and $CNT_v$ in every node $v$ of $T$, we only store them
in the important nodes. 
Besides that, we also store a data structure $E_v$ in every important node.
Let $v_1,\ldots, v_{t}$ be the highest important 
descendants of $v$, i.e., each node $v_i$ is an important node and there 
are no important nodes on the path between $v_i$ and $v$. 
There are $t\leq N^{1/f}$ highest important descendants  of $v$. 
For any $1\leq i \leq t$ and any $1\leq j \leq N_v$, 
 the data structure $E_v$ enables us to count elements with 
color $c\in C_{v_i}$ at positions $m\leq j$ of the array $A_v$. 
We can implement $E_v$ as follows. For any $1\leq i\leq t$, we 
store the positions of all elements in $A_v$ colored with a color 
from $C_{v_i}$ in a standard one-dimensional range counting data
 structure~\cite{Ch88}. 
Every such data structure uses $O(N_{v_i}\log N_{v})$ bits of space 
and answers queries in $O(\log N)$ time.  
Since $\sum_{i=1}^{t} N_{v_i}=N_v$, $E_v$ uses $O(N_v\log N_v)$ bits.

All nodes $v$  on the same level $l=i\floor{(1/f)\log N}$ 
contain $O(N)$ elements. Hence all data structures $REP_v$, $CNT_v$ 
and $E_v$ for important nodes $v$ situated on the same level use 
$O(N\log N)$ bits of space. Since important nodes are situated 
on a constant number of levels, the total space usage is $O(N\log N)$ 
bits.

The query answering procedure is similar to the one described in
 section~\ref{sec:nlognspace}, but only important nodes of $T$ are visited. 
The search starts at the root; we set $a_v=a$, $b_v=b$, and $i=t$, where 
$t$ is the number of the highest important 
descendants of $v$. \\
1. Let $a_{v_i}$ and $b_{v_i}$ denote the interval that corresponds 
to $[a,b]$ in $v_i$. If we know $a_v$ and $b_v$, we can find 
$a_{v_i}$ and $b_{v_i}$ using $E_v$. 
Then, we use the data structure in the node $v_i$ to compute the number of 
colors $m_{v_i}$ in $A_{v_i}[a_{v_i}..b_{v_i}]$ and the sum 
$r_i=\sum_{j=i}^{t} m_{v_j}$. \\
2. If $r_i< K$, we visit $v_i$, report all $m_{v_i}$ colors that occur 
 in $A_{v_i}[a_{v_i}..b_{v_i}]$ and proceed with the child $v_{i-1}$. \\
3. If $r_i\geq K$, we set $K= K-r_{i+1}$ and use the same procedure 
to report top $K$ colors that occur  in $A_{v_i}[a_{v_i}..b_{v_i}]$.  \\ 
The total number of visited nodes is $O(fN^{1/f})=O(N^{1/f})$. 
In every node we answer at most one color counting and one color reporting 
query; hence we obtain an unsorted list of top $K$ colors 
in $O(N^{1/f}\log N + K)$ time.
If $K<N^{1/f}$, we can sort $K$ colors by priorities 
in $O(N^{1/f}\log N)$ time. 
If $K\geq N^{1/f}$, we can use the radix sorting\footnote{ We assume that 
priorities of colors belong to the range $[1,O(N)]$. If this is not the case, 
then priorities can be replaced by their ranks.}
 and sort colors in $O(K)$ time.
Thus a query is answered in $O(N^{1/f}\log N + K)$ time.

Since  $O(N^{1/f'}\log N)=O(N^{1/f})$ for $f'>f$, we can substitute $f'>f$ 
in the above construction and obtain a data structure with 
$O(N)$ space and $O(N^{1/f})$ query cost for any constant $f$. 
\begin{lemma}\label{lemma:nspace}
For any constant $f$, there exists an $O(N\log N)$ bit data structure that 
supports top-$K$ color queries in $O(N^{1/f} + K)$ time.
\end{lemma}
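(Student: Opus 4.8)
The plan is to collect the components already built in this section and verify the two resource bounds. First I would pin down the \emph{important} nodes — those on levels $i\floor{(1/f)\log N}$ for $i=0,1,\ldots,f$, together with all leaves — and store at each important node $v$ only the array $A_v$, the reporting and counting structures $REP_v$, $CNT_v$ of Lemma~\ref{lemma:ram}, and the structure $E_v$. Recall that $E_v$ is, for each of the at most $N^{1/f}$ highest important descendants $v_i$ of $v$, a one-dimensional range-counting structure~\cite{Ch88} over the positions of $A_v$ holding a color of $C_{v_i}$; since the subarrays $A_{v_i}$ partition $A_v$, node $v$ costs $O(N_v\log N_v)$ bits in total.

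For the space bound I would argue level by level. On a fixed level $l=i\floor{(1/f)\log N}$, the arrays $A_v$ over the important nodes $v$ on level $l$ are pairwise disjoint sub-selections of $A$, so $\sum_v N_v=O(N)$ and the structures stored on that level occupy $O(N\log N)$ bits; as the important internal nodes lie on a constant number of levels (plus the leaf level), the total is $O(N\log N)$ bits. For the query I would run the descent already sketched: starting at the root with $a_v=a$, $b_v=b$, $i=t$, at each important node $v$ I scan its highest important descendants $v_t,v_{t-1},\ldots$, using $E_v$ to translate $[a_v,b_v]$ into $[a_{v_i},b_{v_i}]$, $CNT_{v_i}$ to obtain $m_{v_i}$, and the running sums $r_i=\sum_{j\ge i}m_{v_j}$; I dump every fully-consumed child via $REP_{v_i}$ and recurse into the single child where the $K$-th color falls, after resetting $K$ to $K-r_{i+1}$. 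A root-to-leaf chain of important nodes meets $O(f)$ levels and inspects $O(N^{1/f})$ children per level, so $O(fN^{1/f})=O(N^{1/f})$ nodes are touched, each paying one $O(\log N)$ counting query plus reporting time proportional to the colors it emits; this gives $O(N^{1/f}\log N+K)$, and producing sorted output (if wanted) adds $O(N^{1/f}\log N)$ when $K<N^{1/f}$ and $O(K)$ via radix sort otherwise.

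Finally I would remove the stray $\log N$ factor by the reparametrization already noted: since $N^{1/f'}\log N=O(N^{1/f})$ for any constant $f'>f$, carrying out the whole construction with $f'$ in place of $f$ yields an $O(N\log N)$ bit structure with $O(N^{1/f}+K)$ query time for the originally desired $f$. The only points I would check with care are (i) that $E_v$ genuinely lets us locate, in $O(\log N)$ time, where $A[a..b]\cap C_{v_i}$ lands inside $A_{v_i}$ — i.e.\ that a bundle of range-counting structures suffices for the interval translation — and (ii) that the bookkeeping of $K$ and $r_{i+1}$ at the split child is exact, so that no color is reported twice and exactly $K$ colors are reported; the remaining steps are routine accounting.
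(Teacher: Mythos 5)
Your proposal is correct and follows essentially the same route as the paper: the same choice of important nodes on levels $i\floor{(1/f)\log N}$, the same per-level space accounting, the same descent that scans the highest important descendants with $E_v$ for interval translation and running sums $r_i$, the same $O(N^{1/f}\log N+K)$ analysis with the radix-sort case distinction, and the same final reparametrization $f'>f$ to absorb the $\log N$ factor. The two points you flag for verification are exactly the ones the paper handles implicitly (per-child range counting suffices for the translation, and the $K\mapsto K-r_{i+1}$ update at the split child is the paper's bookkeeping as well), so nothing is missing.
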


\section{A Data Structure with $O(K)$ Query Time}
\label{sec:opttime}
In this section we will use the result of Lemma~\ref{lemma:nspace}
for $f=2$ as the starting point.  Although the reporting time
$O(\sqrt{N}+K)$ is very high, $O(N^{1/2}+K)=O(K)$ if
$K=\Omega(N^{1/2})$. Hence, the data structure of
Lemma~\ref{lemma:nspace} is optimal for $K=\Omega(\sqrt{N})$.
We can take care of the case when $K< \sqrt{N}$ colors must be
reported by explicitly storing the solutions for some pre-defined
queries and storing recursively defined data structures for subarrays.
We start by explaining the main ideas of our approach; a more detailed
description will be given later in this section.
 
{\bf Our Approach.} 
Lemma~\ref{lemma:nspace} enables us to answer top-$K$ queries in 
$O(K+\sqrt{N})=O(K)$ time when $K\geq \sqrt{N}$. 
Using the approach described below,  we can store the answers to top-$K$ 
queries for $K\leq \sqrt{N}$ and for a set of  intervals with  $O(\sqrt{N})$ 
endpoints using linear space.  
Let $J=\{\, i\floor{\sqrt{N}}\,\}$ and let $L(m,a,b)$ denote the set
of top $m$ colors in $A[a..b]$ sorted in the decreasing order by
priorities.  For every $i\in J$ and for every interval $[i-2^r,i]$ and
$[i,i+2^r]$, $r=1,2,\ldots,\log N$, we explicitly store the lists
$L(\sqrt{N},i-2^r,i)$ and $L(\sqrt{N},i, i+2^r)$.  For any interval
$[a,b]$, such that $a\in J$ and $b\in J$, we can represent $[a,b]$ as
a union of two (possibly intersecting) intervals $[a,a+2^x]$ and
$[b-2^x,b]$ for $x=\floor{\log(b-a)}$.  We can find the top
$K\leq \sqrt{N}$ colors in $A[a..b]$ by examining the first $K$ colors
in $L(\sqrt{N}, a,a+2^x)$ and $L(\sqrt{N},b-2^x,b)$ and reporting the
$K$ colors with highest priorities.  Hence, special queries on
intervals $A[a..b]$, where $a$ and $b$ are from $J$, can also be
answered in $O(K)$ time. 

We store the data structure of
Lemma~\ref{lemma:nspace} for each subarray $A[i_1..i_2]$, such that
$i_2$ follows $i_1$ in $J$.  Since each data structure for
$A[i_1..i_2]$ contains roughly $\sqrt{N}$ elements, it supports
queries in $O(N^{1/4}+ K)$ time. This query time is optimal for $K\geq
N^{1/4}$.  We thus obtained a data structure with optimal query time
for $K\geq N^{1/4}$: each interval $[a,b]$ can be represented as a
union of three intervals $[a,a_1]$, $[a_1,b_1]$ and $[b_1,b]$ such
that $a_1\in J$ and $b_1\in J$. We can find sorted lists of top-$K$
colors in all three intervals as described above; then, we can
traverse the lists and identify the top colors in $O(K)$ time.  

We can apply the same construction once again and obtain optimal query time
for $K\geq N^{1/8}$. If we apply the same idea $O(\log \log N)$
times, then we can support queries in optimal $O(K)$ time for an
arbitrary $K$. The precise description is given below.

{\bf Data Structure.}
Let $\rho(l)=(1/2)^l$ and  $\Delta=\log^2 N$.
We define the sets $J_1,J_2,\ldots,J_h$, where
$h=O(\log\log N)$, as $J_l=\{ i\cdot \floor{N^{\rho(l)}}\cdot\Delta\,|\, 
0\leq i \leq N^{1-\rho(l)}/\Delta \,\}$.  The last index $h$ is chosen so 
that $N^{\rho(l)}=\mathtt{const}$.  
For every $j\in J_l$, $1\leq l \leq h$, we store
$L(\ceil{N^{\rho(l)}},j-2^r,j)$ and $L(\ceil{N^{\rho(l)}},j,j+2^r)$
for $r=1,2,\ldots,\log N$.

For every subarray $A[j_1..j_2]$, such that $j_2$ follows $j_1$ in
$J_l$ and $1\leq l < h$, we store the data structure $R(l,j_1,j_2)$
implemented as in section~\ref{sec:linspace}.  For every subarray
$A[j_1..j_2]$, such that $j_2$ follows $j_1$ in $J_h$, we store the
data structure $R(h,j_1,j_2)$.  $R(h,j_1,j_2)$ is also implemented
according to Lemma~\ref{lemma:nspace}, but we set the constant $f=6$, 
so that $R(h,j_1,j_2)$ answers top-$K$ queries in $O((j_2-j_1)^{1/6}+K)$ 
time.
For every subarray $A[j_1..j_2]$, such that $j_2$ follows $j_1$ in
$J_h$ we also store a data structure $F(j_1,j_2)$ that supports
top-$K$ color queries in $O(K)$ time in the case when 
$K\leq \log^{1/3} N$.  This data structure will be
described later in this section.  Data structures
$R(\cdot,\cdot,\cdot)$ and $F(\cdot,\cdot)$ use modified sets of
colors.  Let $C(j_1,j_2)$ be the set of all colors that occur in
$A[j_1..j_2]$.  Let $\cM(j_1,j_2)$ denote the set in which every color
in $C(j_1,j_2)$ is replaced by the rank of its priority  in
$C(j_1,j_2)$: $\cM(j_1,j_2)=\{\, \prank(c,C(j_1,j_2) )\,|\,c\in
C(j_1,j_2)\,\}$, where $\prank(c,C)=|\{ c'\in C\,|\, p(c')\leq
p(c)\,\}$.  We store the data structure $R(l, j_1,j_2)$ or
$F(j_1,j_2)$ for the set of colors $\cM(j_1,j_2)$ and assume that the
priority of a color $p\in \cM(j_1,j_2)$ equals to $p$.  We also store a
table $Tbl(j_1,j_2)$ that enables us to find  the color $c\in
C$ that corresponds to a color $p\in \cM(j_1,j_2)$.  In this way we
guarantee that all colors and their priorities in $R(l,j_1,j_2)$ or
$F(j_1,j_2)$ belong to the range $[1,j_2-j_1+1]$.


{\bf Space Usage.}
We turn to the space analysis of our method. 
All lists $L(\cdot,\cdot,\cdot)$ use $o(N)$ bits: for each
$l$, there are $O(\frac{N}{N^{\rho(l)}\log^2 N})$ lists and each
list uses $O(N^{\rho(l)}\log N)$ bits. Hence, the total space usage
of all lists for a fixed $l$ is $O(N/\log N)$ bits.

Every table $Tbl(j_1,j_2)$ for a data structure $R(l,j_1,j_2)$ 
can be stored in $O(N^{\rho(l)}\log (N^{\rho(l)}))$ bits. 
For every color $p$, $1\leq p \leq |C(j_1,j_2)|$, the $p$-th 
entry contains a pointer to the color $c_p$, such that 
$\prank(c_p,C(j_1,j_2))=p$. The pointer to $c_p$ is the relative 
position of an element of color $c_p$ in $A[j_1..j_2]$. 
That is, $Tbl(j_1,j_2)[p]=t$ for some $t$ such that the color of $A[j_1+t]$
is $c_p$. Since $c_p\in C(j_1,j_2)$,  such $h$ always exists. 
Since $0\leq h \leq N^{\rho(l)}$, we need $O(N^{\rho(l)}\log (N^{\rho(l)}))$ 
bits to store the table. 

We can also store all data structures
 $R(\cdot,\cdot,\cdot)$ in $O(N\log N)$ bits: every data 
structure $R(l,j_1,j_2)$ contains $O(N^{\rho(l)})$ elements 
and colors of elements belong to the interval $[1,N^{\rho(l)}]$. 
Hence, by Lemma~\ref{lemma:nspace} we can store each 
$R(l,j_1,j_2)$ in $O(N^{\rho(l)}\log(N^{\rho(l)}))$ bits. 
Thus for a fixed value of $l$, all data structures $R(l,j_1,j_2)$ with 
tables $Tbl(j_1,j_2)$  use 
$O(N\rho(l)\log N)=\rho(l)O(N\log N)$ bits of space (the constant hidden in 
the big Oh notation does not depend on $l$). 
Since $\sum_{l=0}^{h} \rho(l)=O(1)$, all data structures 
$R(\cdot,\cdot,\cdot)$ use $O(N\log N)$ bits.  We will show below that 
all $F(\cdot,\cdot)$ also use $O(N\log N)$ bits. Thus the total space usage 
of our construction is $O(N\log N)$ bits.

{\bf Answering Queries.}
The procedure for reporting top $K$ colors in the range $[a,b]$
consists of the following steps.
In step 1, we identify the actual number of reported colors $K_q$: if 
$A[a..b]$ contains $K'$ distinct colors, then $K_q=\min(K',K)$. 
In step 2, we represent $[a,b]$ as a union of at most three intervals. 
Top $K_q$ colors in the middle interval $A[a_1..b_1]$ can be found using 
lists $L(\cdot,\cdot,\cdot)$, as explained in step 3. 
During steps 4-6 we find top $K$ colors in  the two other intervals, 
$A[a..a_1]$ and $A[b..b_1]$.\\
1. We check, whether the number of distinct colors in $[a,b]$ exceeds
$K$.  Using the data structure of Lemma~\ref{lemma:ram}, we report
colors in the interval $[a,b]$ until $K$ colors are reported or the
procedure stops. If the procedure stops when $K'<K$ colors are
reported, we set
$K_q=K'$. Otherwise, we set $K_q=K$.\\
2. We identify the largest value $l$, such that $N^{\rho(l)}\geq K_q$.
We can find $l$ by searching among $h=O(\log\log N)$ different
values in $O(1)$ time using the result of~\cite{FW94}.  Let
$a_1=\ceil{a/\ceil{N^{\rho(l)}}}$ and $b_1=\floor{b/\ceil{N^{\rho(l)}}}$.
\\
3. We identify the top $K_q$ colors in $[a_1,b_1]$ by computing
$x=\log(a_1-b_1)$ and examining the top $K_q$ colors in
$L(\ceil{N^{\rho(l)}},a_1,a_1+2^x)$ and $L(\ceil{N^\rho(l)},b_1-2^x,b_1)$. \\
4. If $l< h$, we use the data structure
$R(l,(a_1-1)\ceil{N^{\rho(l)}},a_1\ceil{N^{\rho(l)}})$ to identify the top 
$K_q$ colors in $A[a..a_1]$ in $O((N^{\rho(l)})^{1/2}+K_q)=O(N^{\rho(l+1)}+K_q)$
time.  Since $K_q>N^{\rho(l+1)}$, all colors in $A[a..a_1]$ are found
in $O(K_q)$ time.
The top $K_q$ colors in $A[b_1..b]$ are found in the same way.\\
5 If $l=h$ and $K_q\geq \log^{1/3} N$, we use the data structure
$R(h,(a_1-1)\ceil{N^{\rho(h)}},a_1\ceil{N^{\rho(h)}})$.  
Since $N^{\rho(h)}=O(\log^2
N )$, this takes $O((\log^2 N)^{1/6}+
K_q)=O(K_q)$ time by Lemma~\ref{lemma:nspace}. The top $K_q$ colors
in $A[b_1..b]$ are found
in the same way.\\
6. If $l=h$ and $K_q < \log^{1/3} N$, we use data
structures $F((a_1-1)\ceil{N^{\rho(h)}},a_1\ceil{N^{\rho(h)}})$ and
$F((b_1\ceil{N^{\rho(h)}},(b_1+1)\ceil{N^{\rho(h)}})$ to report top 
$K_q$ colors in $A[a..a_1]$ and $A[b_1..b]$.\\
7. When we know the top colors in $A[a..a_1]$, $A[a_1..b_1]$, and
$A[b_1..b]$, the top $K_q$ colors in $A[a..b]$ can be found in $O(K_q)$ time.\\

{\bf  Data Structure $F$.}
We  describe the data structure $F(j_1,j_2)$, where $j_1$
and $j_2$ are two consecutive indexes in $J_h$.
Since every color in $\cM(j_1,j_2)$ belongs to $[1,j_2-j_1+1]= [1,
O(\log^2 N)]$, every color in $\cM(j_1,j_2)$ can be specified with
$O(\log \log N)$ bits.  Let $V(j_1,j_2)= \{\,v_i \,\}$ for
$v_i=j_1+i\floor{\sqrt{\log N}}$ and $v_i\leq j_2$.  For every $v_i$ and for 
any $r$
such that $v_i+2^r\leq j_2$, we store the list $L(\ceil{\log^{1/3}
  N}, v_i,v_i+2^r)$; for every $v_i$ and for any $r$ such that
$v_i-2^r\geq j_1$, we store the list $L(\ceil{\log^{1/3}_2
  N},v_i-2^r,v_i)$.  Since there are $O(\log \log N)$ different
values of $r$ for each $v_i$, all lists $L(\cdot,\cdot,\cdot)$ use
$o((j_2-j_1)\log N)$ bits of space.  For any two consecutive indexes $v_i$ and
$v_{i+1}$ in $V(j_1,j_2)$, we store colors of all elements in
$A[v_i..v_{i+1}]$ in one machine word $W(v_i,v_{i+1})$.  Using one
look-up table of size $o(N)$ for all words stored in the data
structure and standard bit operations we can answer top-$K$ queries on
$A[v_i..v_{i+1}]$ in $O(K)$ time.

Any interval $[a,b]$ for $j_1 \leq a < b \leq j_2$ can be represented
as a union of intervals $[a,a_f]$, $[a_f,b_e]$, and $[b_e,b]$, 
where $a_f=\ceil{(a-j_1)/g }$,
$b_e=\floor{(b-j_1)/g }$ and $g=\ceil{\sqrt{\log N}}$.  We can
use lists $L(\ceil{\log^{1/3} N},a_f,a_f+2^x)$ and $L(\ceil{\log^{1/3} N},
b_e-2^x,b_e)$ for $x=\log(b_e-a_f)$ to find the top $K$
colors in $A[a_f..b_e]$.  We can find the top $K$ colors in
$A[a..a_f]$ and $A[b_e..b]$ in $O(K)$ time.  Finally, we can merge
the three lists and obtain the list of top $K$ colors in $O(K)$
time.

Thus we obtain
\begin{theorem}\label{theor:main1}
  There exists an $O(N\log N)$ bits data structure that
  supports top-$K$ color queries in $O(K)$ time.
\end{theorem}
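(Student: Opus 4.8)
The plan is to assemble the components already constructed in this section into a recursive scheme whose query time telescopes down to $O(K)$ for every $K$, and then verify the space bound is $O(N\log N)$ bits. The backbone is the observation from Lemma~\ref{lemma:nspace}: the base data structure answers top-$K$ queries in $O(N^{1/f}+K)$ time, which is already $O(K)$ whenever $K=\Omega(N^{1/f})$. So the only regime that needs help is small $K$, and the idea is to recurse on geometrically shrinking subarray sizes $N^{\rho(l)}=N^{(1/2)^l}$, with $h=O(\log\log N)$ levels, so that by the time $l=h$ the subarrays have size $O(\log^2 N)$ and we can finish with the word-packing data structure $F$.

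\medskip

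First I would describe the three-way interval decomposition that is used at every level. Given a query $[a,b]$ and the target count $K_q$ (computed in Step~1 by running the colored reporting data structure of Lemma~\ref{lemma:ram} until $K$ colors appear or it stops, which costs $O(K)$), we pick the largest $l$ with $N^{\rho(l)}\ge K_q$; this $l$ is found in $O(1)$ time by the $O(\log\log N)$-way search of~\cite{FW94}. We snap $[a,b]$ inward to the grid $J_l$, obtaining a middle interval $[a_1,b_1]$ with endpoints in $J_l$ and two ``fringe'' intervals $A[a..a_1]$ and $A[b_1..b]$ each of length at most $\lceil N^{\rho(l)}\rceil$. The middle interval is handled by the precomputed lists: $[a_1,b_1]$ is covered by two dyadic intervals $[a_1,a_1+2^x]$ and $[b_1-2^x,b_1]$ with $x=\lfloor\log(b_1-a_1)\rfloor$, whose top-$\lceil N^{\rho(l)}\rceil$ color lists $L(\cdot,\cdot,\cdot)$ are stored explicitly; since $K_q\le N^{\rho(l)}$, scanning the first $K_q$ entries of each and merging gives the top $K_q$ colors of $A[a_1..b_1]$ in $O(K_q)$ time. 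The two fringes each sit inside one block of the $J_l$-partition, so they are answered by the recursively stored structure $R(l,\cdot,\cdot)$ (or, at the bottom level, by $R(h,\cdot,\cdot)$ if $K_q\ge\log^{1/3}N$, or by $F$ otherwise). Finally, merging the top-$K_q$ lists of the three pieces costs $O(K_q)$. The key numerical check is that each fringe call runs in $O((N^{\rho(l)})^{1/2}+K_q)=O(N^{\rho(l+1)}+K_q)$ time, and because we chose $l$ maximal we have $K_q > N^{\rho(l+1)}$, so this is $O(K_q)=O(K)$; at the bottom, $R(h,\cdot,\cdot)$ with $f=6$ runs in $O((\log^2 N)^{1/6}+K_q)=O(K_q)$ when $K_q\ge\log^{1/3}N$, and $F$ handles $K_q<\log^{1/3}N$ in $O(K_q)$ time. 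So the recursion has depth one at each level, not a nested descent, and the total query time is $O(K)$.

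\medskip

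Next I would do the space accounting, level by level, which is the routine but essential part. For a fixed $l$ there are $O(N/(N^{\rho(l)}\log^2 N))$ grid blocks, each carrying $O(\log N)$ precomputed lists of $O(N^{\rho(l)})$ colors, so the lists at level $l$ occupy $O(N/\log N)$ bits and the sum over $l$ is $o(N)$. Each recursive structure $R(l,j_1,j_2)$ holds $O(N^{\rho(l)})$ elements over a color set remapped to $[1,N^{\rho(l)}]$ (via $\cM(j_1,j_2)$ and the table $Tbl(j_1,j_2)$), so by Lemma~\ref{lemma:nspace} it uses $O(N^{\rho(l)}\log N^{\rho(l)})$ bits, and summing over all blocks at level $l$ gives $\rho(l)\cdot O(N\log N)$; since $\sum_{l=0}^h \rho(l)=O(1)$, all the $R$-structures together use $O(N\log N)$ bits. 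The $F$-structures similarly contribute $O(N\log N)$ bits (each $F(j_1,j_2)$ over a block of size $O(\log^2 N)$ uses $o((j_2-j_1)\log N)$ bits for its lists plus $O(1)$ machine words for the packed blocks $W(v_i,v_{i+1})$, plus one global lookup table of size $o(N)$). Adding the colored reporting structure of Lemma~\ref{lemma:ram} ($O(N\log N)$ bits) used in Step~1 gives the total $O(N\log N)$ bits.

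\medskip

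\textbf{The main obstacle} I anticipate is not any single calculation but the bookkeeping needed to make the recursion genuinely shallow: one must confirm that at \emph{every} level the fringe pieces are small enough that the next structure down answers them in $O(K_q)$ rather than forcing a further recursive descent, and that the ``largest $l$ with $N^{\rho(l)}\ge K_q$'' choice simultaneously (i) keeps $K_q\le N^{\rho(l)}$ so the precomputed lists of length $\lceil N^{\rho(l)}\rceil$ suffice for the middle interval and (ii) guarantees $K_q>N^{\rho(l+1)}$ so the $O(N^{\rho(l+1)})$ additive term in the fringe query is absorbed into $O(K)$. The boundary case $l=h$ is where the $F$ structure and the word-RAM packing trick must be invoked, and one has to check the arithmetic $N^{\rho(h)}=\Theta(\log^2 N)$, $g=\lceil\sqrt{\log N}\rceil$, and the threshold $\log^{1/3}N$ all line up so that $F$ genuinely runs in $O(K_q)$ for $K_q<\log^{1/3}N$ and $R(h,\cdot,\cdot)$ with $f=6$ covers the complementary range. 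Everything else — the interval-to-dyadic-union decomposition, the three-way merge, the rank-remapping of colors via $\prank$ and $Tbl$ — is standard once the level structure is in place. Assembling these pieces yields the theorem.
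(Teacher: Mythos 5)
Your proposal is correct and follows essentially the same route as the paper: the grid sets $J_l$ with $N^{\rho(l)}=N^{(1/2)^l}$, the precomputed dyadic lists $L(\cdot,\cdot,\cdot)$ for the snapped middle interval, the fringe queries answered by $R(l,\cdot,\cdot)$ (with $f=6$ at level $h$) or by the word-packed structure $F$ for $K_q<\log^{1/3}N$, the choice of the largest $l$ with $N^{\rho(l)}\geq K_q$ so that $K_q>N^{\rho(l+1)}$ absorbs the additive term, and the $\sum_l\rho(l)=O(1)$ space accounting are exactly the paper's argument. No substantive differences to report.
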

The data structure described above can be constructed in 
$O(N\log N)$ time using 
the following algorithm.
Since all data structures $R(\cdot,\cdot,\cdot)$ contain $O(N\log \log N)$ 
elements, all  $R(\cdot,\cdot,\cdot)$ can be constructed in
 $O(N\log \log N)$ time. Data structures $F(\cdot,\cdot)$ can also be 
constructed in $O(N)$ time. 

We construct a data structure of Lemma~\ref{lemma:nlogn} in $O(N\log N)$ 
time and use it to generate lists $L(\cdot,\cdot,\cdot)$.
The total number of lists $L(\cdot,\cdot,\cdot)$ is $O(N\log N/\Delta )$ and 
the total number of elements in all 
 $L(\cdot,\cdot,\cdot)$ is $O((N/\Delta)\log N \log\log N)$.
Every list  $L(N^{\rho(l)},j_1,j_2)$ can be generated in  $O(\log^2 N + N^{\rho(l)})$ time. Hence, all lists  are constructed 
in  $O(N (\log^3 N/\Delta)+ (N/\Delta)\log N \log\log N)
=O(N (\log^3 N/\Delta))$ time. 
Thus the data structure of Theorem~\ref{theor:main1} can be constructed in 
$O(N\log N)$ time. 

The result of Theorem~\ref{theor:main1} can be also extended to the external 
memory model. We will show it in section~\ref{sec:extmem}

\section{An $O(N\log\sigma)$ Bit Data Structure}
\label{sec:optspace}
We can further improve the result of Theorem~\ref{theor:main1} 
in the case when $\sigma= o(N)$ and construct a data structure 
that uses $O(N\log\sigma)$ bits\footnote{In this section we assume that all colors 
are positive integers bounded by $\sigma^{O(1)}$. In the general case, our 
construction needs $O(\sigma\log m)$ additional bits, where $m$ is the maximal 
color}. 
In this section we assume w.l.o.g. that every color is an integer 
between $1$ and $\sigma$ (if this is not the case, we replace the color 
by the rank of its priority).

Our main idea is to divide  the array into chunks, 
so that the data structure for each chunk uses $O(\log \sigma)$ 
bits per element. We also need  the ``global'' data structure 
for searching in many chunks; this data structure 
contains $O(N/\log N)$ elements and therefore can be stored in $O(N)$ bits.
In the following description we will 
distinguish between two cases, $\sigma^2 \geq \log N$ and 
$\sigma^2 < \log N$. Although the data structures for both situations 
 are based on the same idea, for ease of description 
we discuss the two cases separately. 

{\bf Case 1: $\sigma^2 > \log N$.}
The array $A$ is divided into chunks $L_i$ so that each chunk 
contains $\ell=\sigma^3$ elements, $L_i=A[(i-1)\ell+1\,..\, i\ell]$.
We store the data structure of Theorem~\ref{theor:main1} for every chunk. 
Every such data structure uses $O(\ell\log\sigma)$ bits; all chunk 
data structures use $O(N\log\sigma)$ bits of space.

Besides that, we store a top level data structure $D^T$ for an auxiliary 
 array $A^T$. 
The array $A^T$ contains $\sigma$ entries for each chunk; 
the entries $A^T[(i-1)\sigma+1\,..\, i\sigma]$ contain information about colors 
that occur in the chunk $L_i$. 
If a color $c$ occurs in $L_i$, then $A^T[(i-1)\sigma+c]$ is colored 
with $c$. If $c$ does not occur in $L_i$, then $A^T[(i-1)\ell+c]$ is colored
with a dummy color $c_{\cD}$; we assume that the priority 
of $c_{\cD}$ is smaller than the priority of any other color. 
We  store the top-$K$ data structure of Theorem~\ref{theor:main1} 
for $A^T$. Since the total number of elements in $A^T$ is $O(N\sigma/\ell)=
O(N/\sigma^2)=O(N/\log N)$, both $A^T$ and $D^T$ use $O(N)$ bits.

If $a$ and $b$ belong to the same chunk, then we can answer the 
query $Q=[a,b]$ using the data structure for this chunk.
Otherwise, 
a query range $Q=[a,b]$ can be decomposed into $[a,a']$, $[a'+1,b']$, 
and $[b'+1,b]$ for $a'=\ceil{a/\ell}\ell$ and $b'=\floor{b/\ell}\ell$. 
Top $K$ colors in $A[a..a']$ and $A[b'+1..b]$ can be found using chunk 
data structures. Top $K$ colors in $A[a'+1\,..\,b']$ can be found 
using $D^T$: a color $c$ occurs in $A[a'+1\,..\,b']$ if and only if 
$c$ occurs in $A^T[\ceil{a/\ell}\sigma+1 \,..\,  \floor{b/\ell}\sigma]$.  
Hence, we can find the top $K$ colors in $A[a'+1\,..\, b']$ by identifying 
the top $K$ colors in $A^T[\ceil{a/\ell}\sigma+1 \,..\,  \floor{b/\ell}\sigma]$ 
and discarding the dummy color $c_{\cD}$ if necessary. 
Since all three lists of top $K$ colors are sorted by priorities, 
we can merge them and obtain top $K$ colors in $O(K)$ time.

{\bf Case 2: $\sigma^2 < \log N$.} 
We use the same construction as for the case $\sigma^2\geq \log N$: 
the array $A$  is divided into 
chunks and  there is a data structure for a each chunk. There is also 
 a data structure $D^T$ for the array $A^T$ defined as above.   
But now  each chunk $L_i$ consists of $\sigma^2\floor{\log N}$ elements. 
It remains to describe the new chunk data structure. 

Every chunk $L_i$ is divided into $O(\sigma^2\floor{\log\sigma})$ pieces 
$\cP_j$ and 
each piece consists of $\ceil{\log_{\sigma} N}$ elements. 
The array $L_i^T$ contains $\sigma$ entries for each piece. 
If the $j$-th piece $L_i[j\ceil{\log_{\sigma} N} +1\,..\,  (j+1)\ceil{\log_{\sigma} N}]$ contains 
an element 
of color $c$, then the color of $L_i^T[j\sigma+c]$ is $c$; otherwise 
the color of $L_i^T[j\sigma+c]$ is the dummy color  $c_{\cD}$. 
The top-$K$ color data structure for $L_i^T$ needs $O(\sigma^3\log^2\sigma)$ 
bits of space.

Every piece $\cP_j$ contains $O(\log_{\sigma} N)$ elements. Since the color 
of each element can be stored in $O(\log \sigma)$ bits, each piece fits 
into $O(1)$ words of $\log N$ bits. We can answer top-$K$ queries on 
each piece using a pre-computed table $\cT$ of size $o(N)$. The table 
$\cT$ contains 
information about all possible sequences $\seq_l$ of  
$\floor{\log_{\sigma} N/2}$   colors.
For every sequence  $\seq_l$ of $\floor{\log_{\sigma} N/2}$ colors and for any 
$1\leq x_1\leq x_2\leq \floor{\log_{\sigma} N/2}$, we store 
all colors that occur between $x_1$ and $x_2$ in $\seq_l$ 
sorted in decreasing order by their priorities. 
Since there are $O(\sqrt{N})$ sequences $\seq_l$, such a table uses 
$O(\sqrt{N}\log_{\sigma}^2 N\log^2\sigma) =o(N)$ bits of space.
We can decompose a piece into a constant number of sequences $\seq_f$ of 
$\floor{\log_{\sigma} N/2}$ colors; for every  
sequence $\seq_f$, we can look-up the corresponding entry 
in the table $\cT$ and identify (at most) top-$K$ colors 
between any two positions of $\seq_f$ in $O(1)$ time. 
Hence, we can answer a top-$K$ query on a 
piece $\cP_j$ in $O(K)$ time. 

We described in the first part of  section~\ref{sec:optspace} how a 
top-$K$ query 
to an array $A$ can be decomposed into two queries on chunks 
$L_{i_1}$ and $L_{i_2}$ and one query to a data structure for $A^T$.
In the same way,
the top-$K$ color query on a chunk $L_i$ can be answered by answering 
two queries on some pieces $\cP_{j_1}$ and $\cP_{j_2}$ and one query to 
a data structure for $L_i^T$. Hence, a top-$K$ query on a chunk can be 
answered in $O(K)$ time. 

Thus we obtain the following result
\begin{theorem}\label{theor:mainspace}
Let $\sigma$ be the number of different colors.
There exists an $O(N\log \sigma)$ bits data structure that
  supports top-$K$ color queries in $O(K)$ time.
\end{theorem}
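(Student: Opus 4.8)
The plan is to take the $O(N\log N)$-bit, $O(K)$-time structure of Theorem~\ref{theor:main1} as a black box and cut the space down to $O(N\log\sigma)$ bits by decomposing $A$ into blocks whose lengths are polynomial in $\sigma$, so that inside any block an element is described by only $O(\log\sigma)$ bits. First I would normalize: replace each color by the rank of its priority, so that colors are the integers $1,\ldots,\sigma$ and fit in $\ceil{\log\sigma}$ bits; this changes no query answer.

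\textbf{Main construction ($\sigma^2\ge\log N$).} Partition $A$ into chunks $L_i$ of $\ell=\sigma^3$ consecutive elements and store the structure of Theorem~\ref{theor:main1} for each chunk; a chunk has $\sigma^3$ elements, so $\log\ell=O(\log\sigma)$ and each chunk structure costs $O(\sigma^3\log\sigma)$ bits, for $O(N\log\sigma)$ bits overall. To serve queries that cross chunk boundaries I would build a global array $A^T$ with exactly $\sigma$ slots per chunk: the slot for chunk $L_i$ and color $c$ carries color $c$ if $c$ occurs in $L_i$, and a dummy color $c_{\cD}$ of minimum priority otherwise. Then $A^T$ has $O(N\sigma/\ell)=O(N/\sigma^2)=O(N/\log N)$ elements, so the Theorem~\ref{theor:main1} structure $D^T$ on $A^T$ uses $O(N)$ bits. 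A query $[a,b]$ splits into a suffix of one chunk, a run of whole chunks, and a prefix of another chunk; the two partial pieces are answered by the relevant chunk structures, and --- using the key equivalence ``color $c$ occurs in the whole-chunk run of $A$ iff $c$ occurs in the corresponding contiguous run of $A^T$'' --- the middle is answered by $D^T$ (ignoring $c_{\cD}$). Each sub-answer is sorted by priority, so a three-way merge that suppresses repeated colors yields the top $K$ colors of $[a,b]$ in $O(K)$ time.

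\textbf{Few colors ($\sigma^2<\log N$).} Now $A^T$ built over $\sigma^3$-chunks would be too large, so I would enlarge the chunks to $\sigma^2\floor{\log N}$ elements (the global array $A^T$ then has $O(N/(\sigma\log N))$ elements and $D^T$ still fits in $O(N)$ bits) and recurse one level inside each chunk. Split $L_i$ into pieces of $\ceil{\log_\sigma N}$ elements; since a color needs $O(\log\sigma)$ bits, a piece occupies $O(1)$ machine words, so a single universal table indexed by the contents of a half-piece --- $O(\sqrt N)$ entries, $o(N)$ bits --- answers a top-$K$ query on any piece in $O(K)$ time. For sub-queries of $L_i$ that span several pieces I would store the chunk-level analogue of $A^T$, an array $L_i^T$ with $\sigma$ slots per piece, together with the Theorem~\ref{theor:main1} structure on it; summed over all chunks these use $o(N)$ bits because $\sigma<\sqrt{\log N}$ forces $\sigma\log^2\sigma=o(\log N)$. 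A query inside a chunk is decomposed into two partial pieces plus a whole-piece run handled via $L_i^T$, exactly as at the top level, and answered in $O(K)$ time by the same merge argument.

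\textbf{Main obstacle.} The substance of the proof is the space bookkeeping, especially in the $\sigma^2<\log N$ case, where I must verify that the per-chunk structures on the $L_i^T$'s, the global $D^T$, and the universal lookup table are each $O(N\log\sigma)$ or $o(N)$ bits; these reduce to $\log\ell=O(\log\sigma)$ for $\ell=\sigma^{O(1)}$ and to $\sigma\log^2\sigma=o(\log N)$, $\log_{\sigma}^2 N\cdot\log^2\sigma=\log^2 N$ when $\sigma^2<\log N$. The $O(K)$ query bound then follows by iterating the three-way-merge decomposition and invoking Theorem~\ref{theor:main1} at the leaves.
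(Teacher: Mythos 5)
Your proposal is correct and follows essentially the same route as the paper: the $\sigma^3$-element chunks with a Theorem~\ref{theor:main1} structure each, the auxiliary array $A^T$ with $\sigma$ slots per chunk and a dummy minimum-priority color, the three-way split and sorted merge, and in the $\sigma^2<\log N$ case the larger chunks, $\ceil{\log_\sigma N}$-element pieces handled by a universal $o(N)$-bit lookup table, and per-chunk arrays $L_i^T$. Your space bookkeeping (e.g.\ $\sigma\log^2\sigma=o(\log N)$ when $\sigma^2<\log N$) matches the paper's accounting, so there is nothing to add.
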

The construction time of this data structure is $O(N\log \sigma)$; 
this can be shown in the same way as for the data structure of 
Theorem~\ref{theor:main1}.  

\section{An External Memory Data Structure}
\label{sec:extmem}
In the external memory model~\cite{AV88}, the data is stored on a disk in 
blocks and   all computations are performed on data stored in the main memory 
of size $M$. A block consists of $B$ words of $\log N$ bits. 
Using one I/O operation, we can read a block of data from disk 
or write it into disk. The  time complexity of external memory
 algorithms is measured in I/O operations and the space usage is measured 
in blocks. Our data structure for top-$K$ queries can be also extended 
to the external memory data structure that uses $O((N/B)\log\log N)$ blocks of space and answers queries in $O((K/B)\log_B K)$ I/O operations (if the block 
size $B$ is not very small).

In this section we will 
distinguish between two variants of the top-$K$ color problem. 
 In the sorted top-$K$ color problem, $K$ colors with highest priorities 
 must be reported in the descending order of their priorities.
 In the unsorted top-$K$ color problem, $K$ colors with highest priorities 
 are reported in an arbitrary order. 
Sometimes we will specify the space usage of our data structures in bits. 
 We observe, however, that if we describe an external memory 
  data structure that uses 
 $O(s(N))$ bits, then this data structure 
 can be packed into $O((s(N)/(B\log N)))$ block of space. 

The data structure of Lemma~\ref{lemma:nlogn} can be straightforwardly extended to the external memory model. The only difference is that data structures $REP_v$ and $CNT_v$ are implemented as explained in Lemma~\ref{lemma:ext}.
\begin{lemma}\label{lemma:nlognext}
There exists an $O(N\log^2 N)$ bits external memory data structure that 
reports top-$K$ colors in $O(\log^2 N + K/B)$ time. 
\end{lemma}

The space usage can be further reduced to $O(N\log N)$ bits.
In external memory model, we can obtain an unsorted list of top $K$ colors 
in the same way as explained in Lemma~\ref{lemma:nspace}
 (counting and reporting data structures are 
replaced with their external memory counterparts). 
However, there is no external memory equivalent of the radix sort. 
Hence, we need $O((K/B)\log_B K)$ I/Os to sort the $K$ colors. 
\begin{lemma}\label{lemma:nspaceext}
For any constant $f$, there exists an $O(N\log N)$ bits data structure that 
supports unsorted top-$K$ color queries in $O(N^{1/f} + K/B)$ I/Os.\\
For any constant $f$, there exists an $O(N\log N)$ bits data structure that 
supports top-$K$ color queries in $O(N^{1/f} + (K/B)\log_B K)$ I/Os.\\
\end{lemma}

We can extend the  result of Theorem~\ref{theor:main1} 
to the external memory model. The only
major difference is that all data structures $R(\cdot,\cdot,\cdot)$ 
and $F(\cdot,\cdot)$
use the same set of colors $C$. The data structures $R(\cdot,\cdot,\cdot)$ 
are implemented using  Lemma~\ref{lemma:nspaceext}. 
We can implement each data structure $F(j_1,j_2)$ 
 using Lemma~\ref{lemma:nlognext};
since $F(j_1,j_2)$ contains $m= O(\log^2 N)$ elements, it can be implemented 
with $O(m\log N(\log\log N))$ bits, so that queries are answered in 
$O((\log\log N)^2 + K/B)$ time. 
On the other hand, if $B\geq \log^2 N$, then we can 
pack all elements of $F(\cdot,\cdot)$  into one block of space and thus answer 
the top-$K$ color queries on $F(\cdot,\cdot)$ in $O(K/B)$ I/Os. 
\begin{lemma}\label{lemma:boot}
  There exists an external memory data structure that uses
  $O((N/B)\log\log N)$ blocks of space and supports unsorted
  top-$K$ color
  queries in $O((\log\log N)^2+ K/B)$ I/O operations. \\
  There exists an external memory data structure that uses
  $O((N/B)\log\log N)$ blocks of space and supports top-$K$ color
  queries in $O((\log\log N)^2 + (K/B)\log_B K)$ I/O operations.
\end{lemma}
We can further improve the query cost by bootstrapping: we use 
Lemma~\ref{lemma:boot} to implement each data structure $F(j_1,j_2)$. 
Since $F(j_1,j_2)$ contains $O(\log^2 N)$ elements, unsorted  
queries are supported in $O((\log^{(3)}_2 N)^2 + K/B)$ I/O operations. 
Here $\log^{(t)} N$ is defined as $\log^{(t)} N=\log(\log^{(t-1)} N)$
for $t>1$ and $\log^{(1)}N=\log N$. 
The improved  data structure also  supports queries in $O(K/B)$ I/Os 
if $B>(\log\log N)^2$. 
If we apply the same idea $t$ times, we obtain the following result. 
\begin{theorem}\label{theor:main2}
  Let $t$ be an arbitrary positive integer.
  There exists an external memory data structure that uses
  $O((N/B)\log\log N)$ blocks of space and supports unsorted
  top-$K$ color
  queries in $O((\log^{(t)} N)^2 + K/B)$ I/O operations. 
  If $B> (\log^{(t-1)} N)^2$, then queries are supported in 
  $O(K/B)$ I/Os.\\
  There exists an external memory data structure that uses
  $O((N/B)\log\log N)$ blocks of space and supports top-$K$ color
  queries in $O((\log^{(t)} N)^2 + (K/B)\log_B K)$ I/O operations.
  If $B> (\log^{(t-1)} N)^2$, then queries are supported in 
  $O((K/B)\log_B K)$ I/Os.
\end{theorem}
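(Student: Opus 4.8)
The plan is to prove the statement by bootstrapping, with Lemma~\ref{lemma:boot} serving as the base case: its first assertion is exactly the claim for $t=2$, and the case $t=1$ is subsumed by it since $(\log^{(2)}N)^2\le(\log^{(1)}N)^2$. The engine of the bootstrap is the following observation about the external-memory construction used to prove Lemma~\ref{lemma:boot}. Since the top-$K$ colors are produced in sorted order and the procedure terminates of its own accord once all colors have been output, the value $K$ need not be known in advance, and the only place where the query algorithm spends more than $O(K/B)$ I/Os (respectively $O((K/B)\log_B K)$ I/Os in the sorted setting) is in the queries to the innermost data structures $F(j_1,j_2)$; in Lemma~\ref{lemma:boot} these are realized via Lemma~\ref{lemma:nlognext} on arrays of $m=\Theta(\log^2 N)=\Theta((\log^{(1)}N)^2)$ elements, at cost $O((\log\log N)^2+K/B)=O((\log^{(2)}N)^2+K/B)$ I/Os. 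Thus the additive term in the query bound is governed solely by the size of these innermost arrays, and shrinking them shrinks the additive term.

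\textbf{One bootstrapping step.} Suppose we already have a data structure meeting the bounds of the theorem for some parameter $t'\ge 2$, whose innermost base structures are defined on arrays of $\Theta((\log^{(t'-1)}N)^2)$ elements and realized through Lemma~\ref{lemma:nlognext}. Replace each such innermost structure, which sits on an array of $m'=\Theta((\log^{(t'-1)}N)^2)$ elements, by the external construction from the proof of Lemma~\ref{lemma:boot} applied to that $m'$-element array, but with \emph{its} innermost $F$-structures --- now on arrays of $\Theta(\log^2 m')=\Theta((\log^{(t')}N)^2)$ elements --- realized through Lemma~\ref{lemma:nlognext}. A query on the replacement costs $O((\log\log m')^2+K/B)=O((\log^{(t'+1)}N)^2+K/B)$ I/Os, and since the surrounding query algorithm is unchanged (a decomposition into $O(1)$ intervals, each handled recursively, followed by a merge) the whole data structure now answers queries in $O((\log^{(t'+1)}N)^2+K/B)$ I/Os. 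As in Lemma~\ref{lemma:boot}, every data structure $R(\cdot,\cdot,\cdot)$ and $F(\cdot,\cdot)$ at every nesting level is built over the common global color set $C$. Because $t$ is a fixed constant, iterating this step $t-2$ times terminates and yields query time $O((\log^{(t)}N)^2+K/B)$ (unsorted) and $O((\log^{(t)}N)^2+(K/B)\log_B K)$ (sorted).

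\textbf{Space and the block-size clause.} Only the innermost base structures carry the $\log\log N$ factor in the block count. After $t-2$ bootstrapping steps there are $O\!\left(N/(\log^{(t-1)}N)^2\right)$ of them, each on $\Theta((\log^{(t-1)}N)^2)$ elements and, by Lemma~\ref{lemma:nlognext} over the global color set, occupying $O\!\left((\log^{(t-1)}N)^2\cdot\log N\cdot\log^{(t)}N\right)$ bits; the sum is $O(N\log N\log^{(t)}N)$ bits, i.e.\ $O\!\left((N/B)\log^{(t)}N\right)$ blocks, which is $O((N/B)\log\log N)$ blocks since $\log^{(t)}N\le\log^{(2)}N$ for $t\ge2$. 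Every other component --- the precomputed lists $L(\cdot,\cdot,\cdot)$, the tables $Tbl(\cdot,\cdot)$, the structures $R(\cdot,\cdot,\cdot)$, and the top-level auxiliary arrays, over all $O(1)$ nesting levels --- uses $O(N\log N)$ bits in total, i.e.\ $O(N/B)$ blocks, which is dominated. For the block-size clause, observe that when $B>(\log^{(t-1)}N)^2$ every innermost array, whose $\Theta((\log^{(t-1)}N)^2)$ elements occupy $O((\log^{(t-1)}N)^2)$ words even with global colors, fits into a single block; hence a top-$K$ query on it costs $O(K/B)$ I/Os and the additive term in the bound disappears, while in the sorted version the only residual overhead is the external-memory sorting of the $O(K)$ reported colors, which contributes the $\log_B K$ factor exactly as in Lemma~\ref{lemma:nspaceext}.

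\textbf{Main obstacle.} The substantive care needed is bookkeeping rather than a new idea: tracking the iterated-logarithm arithmetic so that $t-2$ bootstrapping steps produce precisely the additive term $(\log^{(t)}N)^2$ and precisely the threshold $(\log^{(t-1)}N)^2$, and checking that the $t$-fold nesting does not let the innermost-structure space accumulate beyond $O((N/B)\log\log N)$ blocks --- which holds because each additional level simultaneously shrinks the innermost array size and multiplies the total innermost space only by a further-iterated logarithm, never exceeding $\log\log N$.
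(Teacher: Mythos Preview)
Your proof is correct and follows essentially the same bootstrapping approach as the paper: the paper's argument is simply ``use Lemma~\ref{lemma:boot} to implement each data structure $F(j_1,j_2)$'' and then ``apply the same idea $t$ times,'' which is exactly your inductive replacement of the innermost base structures by the Lemma~\ref{lemma:boot} construction. Your treatment is in fact more explicit than the paper's, which leaves the iterated-logarithm arithmetic, the space accounting across nesting levels, and the block-size threshold entirely to the reader; your verification that the innermost structures contribute $O((N/B)\log^{(t)}N)\subseteq O((N/B)\log\log N)$ blocks while all other components stay within $O(N/B)$ blocks is a detail the paper omits.
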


\section{Online Queries}
\label{sec:online}
We assumed in the above description that the data structure must report 
$K$ top colors in the query interval, and the value of $K$ is specified 
with the query. The same results are also valid in the online reporting 
scenario: the data structure reports top colors from a specified interval 
until the user terminates the reporting procedure or all colors in the 
interval are reported. 

It suffices to apply the following trick, described in 
e.g.~\cite{BFGO09}.
Let $K_i=2^i$. The reporting procedure consists of stages indexed by 
$i=-1,0,1,2,\ldots$. 
During the $i$-th stage we: 
(i) use the data structure of Theorem~\ref{theor:main1} or Theorem~\ref{theor:main2} to generate the sorted list $\cL_{i+1}$ of the top $2K_{i+1}-1$ colors, 
(ii) remove the first $K_{i+1}-1$ elements from $\cL_{i+1}$, and  
(iii) if $i\geq 0$, output  the colors from $\cL_i$.

By Theorem~\ref{theor:main1}, we can 
find the top $2K_{i+1}-1$ colors in $O(2K_{i+1})=O(K_i)$ time.
Hence, steps (i) and (ii) above take $O(K_i)$ time. 
The list $\cL_0$ is produced during the initial $(-1)$-th stage  
in $O(1)$ time. 
For $i\geq 0$,  we interleave steps (i) 
and (ii) with the step (iii): every time when we output one element 
of $\cL_i$, we spend $O(1)$ time on steps (ii) and (iii). Hence, the list 
$\cL_{i+1}$ is known when the stage $i$ is completed.

\section{Document Retrieval}
\label{sec:docret}
{\bf Preliminaries.}
The generalized suffix tree for documents $d_1,\ldots, d_s$ is the compact 
trie that contains all suffixes of the string 
$d_1\$_1\ldots d_{s-1}\$_{s-1}d_s$, where $\$_1<\$_2<\ldots<\$_s$ are 
additional dummy symbols. The path of a node $v$ is the string 
obtained by concatenating all edge labels on the path from the root to $v$. 
The locus of a pattern $P$ is the highest node $v$, such that $P$ is the 
prefix of the path of $v$. All occurrences of $P$ correspond to the leaf
descendants of the locus of $v$. 
The locus of $P$ can be found in $O(|P|)$ time. We refer to e.g.,~\cite{HSV09,VM07} and references therein for a more detailed description.

{\bf Ranked Document Listing.} 
We store all leaves of the generalized 
suffix tree in an array $A$. We set the color of the $i$-th element to 
$c_j$ if and only if the suffix corresponding to the $i$-th leaf $l_i$ 
belongs to the document $d_j$; the priority of the color $c_j$ equals to 
the priority of the document $d_j$, $p(c_j)=p(d_j)$. 
In every internal node of the suffix array, we store the maximal and minimal 
index of its leaf descendants. 

Now the ranked document listing query can be answered as follows. 
We identify the locus $v$ of a query pattern $P$ in $O(|P|)$ time. 
Let $\min_v$ and $\max_v$ be the minimal and maximal indexes of the leaf 
descendants of $v$. Reporting top-$K$ colors that occur in $A[\min_v..\max_v]$ 
is equivalent to reporting $K$ most highly ranked documents that contain 
$P$ in the reverse order of their ranks. 
Hence, we can answer a ranked document listing query in $O(|P|+K)$ 
time. Additional space needed by our data structure is $O(N\log s)$ bits, 
where $N$ is the total length of all documents.
\begin{corollary}
There exists an $O(N\log N)$ bits data structure that 
supports ranked document listing queries in $O(|P|+K)$ time.
\end{corollary}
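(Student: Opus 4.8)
The plan is to reduce ranked document listing to a single top-$K$ color query by combining the generalized suffix tree with the data structure of Theorem~\ref{theor:main1}. First I would build the generalized suffix tree for $d_1\$_1\ldots d_{s-1}\$_{s-1}d_s$, list its leaves from left to right in an array $A$, and color the $i$-th entry of $A$ with $c_j$ exactly when the $i$-th suffix belongs to document $d_j$, setting $p(c_j)=p(d_j)$. Each internal node $v$ is preprocessed with the smallest and largest indices $\min_v,\max_v$ of its leaf descendants; this is well defined because the leaf descendants of a suffix-tree node occupy a contiguous block of $A$. On top of this I would store the top-$K$ color structure of Theorem~\ref{theor:main1} for the colored array $A$.

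The correctness claim to verify is that, for the locus $v$ of a query pattern $P$, the distinct colors appearing in $A[\min_v..\max_v]$ are exactly the indices of documents containing $P$. This follows from the standard suffix-tree fact that the occurrences of $P$ are precisely the leaf descendants of its locus, together with the use of the separators $\$_i$, which prevents any suffix straddling a document boundary from contributing a spurious occurrence; hence $c_j$ occurs in $A[\min_v..\max_v]$ iff $P$ occurs in $d_j$. Consequently, reporting the top-$K$ colors of $A[\min_v..\max_v]$ in decreasing order of priority is exactly the required task of reporting the $K$ highest-ranked documents containing $P$ in rank order.

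For the bounds: locating $v$ takes $O(|P|)$ time by descending the suffix tree, reading $(\min_v,\max_v)$ takes $O(1)$ time, and the top-$K$ query on $A[\min_v..\max_v]$ takes $O(K)$ time by Theorem~\ref{theor:main1}, giving $O(|P|+K)$ overall. The suffix tree with its min/max annotations uses $O(N\log N)$ bits, and since the number of colors equals $s\le N$, the structure of Theorem~\ref{theor:main1} on $A$ also uses $O(N\log N)$ bits (invoking Theorem~\ref{theor:mainspace} instead bounds the \emph{additional} space by $O(N\log s)$, matching the sharper statement in the introduction). I expect no genuine data-structural obstacle here — all the difficulty is already absorbed into Theorem~\ref{theor:main1}; the only point requiring care is stating the reduction cleanly, in particular that the leaf-descendant range of the locus captures exactly the set of documents containing $P$.
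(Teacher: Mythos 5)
Your proposal is correct and matches the paper's own argument: both store the document (color) array over the suffix-tree leaves with priorities equal to document ranks, record $\min_v,\max_v$ at each node, and answer a query by locating the locus in $O(|P|)$ time and issuing one top-$K$ color query via Theorem~\ref{theor:main1} (or Theorem~\ref{theor:mainspace} for the $O(N\log s)$ additional-space bound). No substantive difference from the paper's proof.
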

{\bf Ranked $t$-Mine Problem.}
Muthukrishnan~\cite{M02} showed how we can identify all documents that 
contain at least $t$ occurrences of a pattern $P$ by reporting 
all colors in an array $A^t[a_p..b_p]$ that contains $O(N/t)$ elements.
That is, for any pattern $P$ we can identify indexes $a_p$ and $b_p$ 
in $O(|P|)$ time, so that a document contains at least $t$ occurrences of  $P$ 
if and only if the corresponding color occurs in $A^t[a_p..b_p]$ at 
least once.   Details about the array $A^t$ can be found in~\cite{M02}.
All  $A^t$ for all possible values of $t$ contain 
$O(\sum_t (N/t) )= O(N\log N)$ elements. 

We store the data structure $D^t$ of Theorem~\ref{theor:mainspace} for 
each $A^t$. Using this data structure, we can report top-$K$ colors 
in $A^t[a_p..b_p]$. Obviously, this is equivalent to reporting $K$ 
 most highly ranked documents that contain $t$ occurrences of a pattern $P$. 
Each $D^t$ can be stored in $O((N/t)\log s)$ bits, where $s$ is the number of 
documents. Hence, all data structures $D^t$ use $O(N\log s)$ words of 
$\log N$ bits.
\begin{corollary}
There exists a data structure that  uses 
$O(N\log s)$ words, where $s$ is the 
number of documents, and 
supports ranked $t$-mine  queries in $O(|P|+K)$ time.
\end{corollary}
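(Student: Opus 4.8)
The plan is to combine Muthukrishnan's reduction of the $t$-mine problem to a color reporting query on an auxiliary array with our top-$K$ data structure of Theorem~\ref{theor:mainspace}. First I would recall the construction of $A^t$ from~\cite{M02}: for each value of $t$, there is an array $A^t$ with $O(N/t)$ entries and a way, given a pattern $P$, to compute in $O(|P|)$ time indexes $a_p,b_p$ such that a document $d_j$ contains at least $t$ occurrences of $P$ if and only if the color $c_j$ appears in $A^t[a_p..b_p]$. I would set the priority of color $c_j$ (in every $A^t$) equal to the rank $p(d_j)$ of document $d_j$. Then reporting the top $K$ colors in $A^t[a_p..b_p]$ by decreasing priority is exactly reporting the $K$ highest-ranked documents containing $P$ at least $t$ times, in rank order.

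Next I would build, for each $t$ that can actually arise (i.e.\ $1\le t\le N$, though only $O(\log N)$ distinct array sizes really matter), the $O(N\log\sigma)$-bit data structure of Theorem~\ref{theor:mainspace} on $A^t$, where here $\sigma\le s$ since there are at most $s$ colors/documents. The query algorithm is then immediate: given $P$, find its locus in the generalized suffix tree in $O(|P|)$ time, extract $a_p,b_p$ for the requested $t$ in $O(|P|)$ additional time, and invoke the top-$K$ query on $A^t$ in $O(K)$ time; the total is $O(|P|+K)$.

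For the space bound I would argue as follows. By Theorem~\ref{theor:mainspace}, the structure on $A^t$ uses $O(|A^t|\log s) = O((N/t)\log s)$ bits, i.e.\ $O((N/t))$ words of $\log N$ bits (treating $\log s\le\log N$). Summing over all relevant values of $t$ gives $\sum_{t=1}^{N} O(N/t) = O(N\log N)$ words of $\log N$ bits; more carefully, since the arrays $A^t$ have only $O(\log N)$ distinct sizes and their total number of elements is $O(N\log N)$ as noted in the excerpt, the aggregate is $O(N\log s)$ words. Adding the $O(N\log s)$-bit suffix tree overhead (the array of leaves with colors, plus min/max leaf indices at internal nodes) does not change this bound.

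The main obstacle is the bookkeeping for the sum over $t$: one must be careful that we are not silently storing $\Theta(\log N)$ full-size arrays. The point is that $A^1$ has $\Theta(N)$ elements but $A^t$ shrinks geometrically-ish in the number of distinct lengths, so $\sum_t |A^t| = O(N\log N)$, and multiplying by $\log s$ bits per element gives $O(N\log s)$ words --- which is the claimed bound. Everything else (correctness of the reduction, the $O(|P|)$ locus computation, the $O(K)$ top-$K$ query) is a direct application of~\cite{M02} and Theorem~\ref{theor:mainspace}.
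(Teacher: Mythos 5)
Your proposal is correct and follows essentially the same route as the paper: Muthukrishnan's arrays $A^t$ with $O(N/t)$ elements and $O(|P|)$-time computation of $[a_p,b_p]$, a data structure of Theorem~\ref{theor:mainspace} (with at most $s$ colors, priorities equal to document ranks) built on each $A^t$, and the harmonic-sum accounting $\sum_t O(N/t)=O(N\log N)$ elements at $O(\log s)$ bits each, i.e.\ $O(N\log N\log s)$ bits $=O(N\log s)$ words of $\log N$ bits. The only blemish is the aside about ``$O(\log N)$ distinct array sizes,'' which is neither accurate nor needed --- the bound rests solely on the harmonic sum, exactly as in the paper.
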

{\bf Most Relevant Documents Problem.} 
Recently, Hon et al.~\cite{HSV09} developed a framework for reporting $K$ most 
relevant documents. In addition to reporting $K$ most highly ranked documents,
the structure of~\cite{HSV09} enables us to report $K$ documents with highest 
scores, so that  the score depends on both the document $d$ and the 
pattern $P$. Combining their approach with our data structure, we can 
slightly improve their results.

Using the construction of~\cite{HSV09}, the problem of reporting 
$K$ most highly scored documents with respect to a metric $\rel(d,P)$  is 
reduced to a problem of reporting $K$ highest 
values that occur in $|P|$ non-overlapping ranges
$\cA[a_1..b_1]$, $\cA[a_2..b_2]$,$\ldots$, $\cA[a_{|P|}..b_{|P|}]$ of the 
array $\cA$. The array $\cA$ of size $O(N)$ contains document identifiers, 
and every document occurs in 
$\cA[a_1..b_1]$, $\cA[a_2..b_2]$,$\ldots$, $\cA[a_{|P|}..b_{|P|}]$ at most 
once. We refer to~\cite{HSV09} for the description of their data structure.

Our improvement is based on storing the array $\cA$  in the data structure of 
Theorem~\ref{theor:mainspace}. At the beginning of the search procedure, 
we identify the maximum element in every range $\cA[a_i..b_i]$ and 
store them in a heap data structure ( if $|P|>K$, then we store only $K$ 
largest elements in the heap). Then, we repeat the following 
steps $K$ times: (i) we extract the maximum element from the heap
and add it at the end of our  list of top documents
(ii) if the element belongs to the range $\cA[a_j..b_j]$, we obtain 
the next highest value in $\cA[a_j..b_j]$ and add it to the heap. 
Since the heap contains $|P|$ elements, extracting the maximum element 
from the heap and inserting a new element into the heap take
 $O(\log |P|)$ time. If $|P|=\log^{O(1)}N$, then heap 
operations can be implemented in $O(1)$ time~\cite{FW94,T99}.
The data structure of~\cite{FW94} uses 
multiplications or other time-consuming operations. 
In our case, however,  all elements stored in the heap are bounded by 
$N$. Hence, we can replace each time consuming operation by $O(1)$ 
 bit operations and look-ups in a table of size $o(N)$ that can be initialized 
in $o(N)$ time.
As explained in section~\ref{sec:online}, finding the next largest element in 
the range takes $O(1)$ time.
Thus the procedure takes $O(|P|+K\log|P|)$ time or $O(|P|+K)$ time if 
$|P|=\log^{O(1)}N$.
\begin{corollary}
There exists an $O(N\log N)$ bits data structure that 
supports most relevant documents queries in $O(|P|+K\log(|P|) )$ time.
If $|P|=\log^{O(1)} N$, then queries can be supported in $O(|P|+K)$ time.
\end{corollary}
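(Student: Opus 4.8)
The plan is to feed the data structure of Theorem~\ref{theor:mainspace} into the reduction of Hon et al.~\cite{HSV09}. First I would recall that their framework reduces a most-relevant-documents query for a metric $\rel(d,P)$ to the task of reporting the $K$ largest values occurring in $|P|$ pairwise non-overlapping ranges $\cA[a_1..b_1],\dots,\cA[a_{|P|}..b_{|P|}]$ of an array $\cA$ of size $O(N)$, with the key property that every document identifier occurs at most once across all of these ranges. I would store $\cA$ in the structure of Theorem~\ref{theor:mainspace}, using document identifiers as colors and the $\rel$-values as priorities; since there are $s\le N$ colors this takes $O(N\log N)$ bits, and by section~\ref{sec:online} the structure can stream the colors of any single range in decreasing order of priority at $O(1)$ amortized time per reported color.

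Next I would perform a standard $|P|$-way merge driven by a priority queue. I initialize one online stream per range and read its first (hence largest) element; this costs $O(|P|)$ time and yields $|P|$ candidates, of which only the $\min(|P|,K)$ largest need to be kept in the heap, since within a range values only decrease and therefore any range that ever contributes to the top-$K$ output must have its maximum among the $K$ largest range-maxima. I then repeat $K$ times: extract the heap maximum, append it to the output, and, if it came from range $\cA[a_j..b_j]$, advance that stream by one position and insert the new element into the heap. Correctness follows from the no-duplicates property of the reduction: at each step the heap holds exactly the current frontier of unreported candidates, so the extracted maximum is the globally next most relevant document. The running time is $O(|P|)$ for initialization, plus $O(K)$ stream-advance steps at $O(1)$ each, plus $2K$ heap operations; with a comparison-based heap on $\le\min(|P|,K)$ keys this is $O(|P|+K\log|P|)$, and the extra space is $O(|P|\log N)$ bits, keeping the total $O(N\log N)$ bits.

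For the second claim, when $|P|=\log^{O(1)}N$ I would replace the comparison heap by an integer priority queue achieving $O(1)$ time per operation~\cite{FW94,T99}. The subtlety is that the Fredman--Willard construction uses multiplications, which our model disallows; since all keys stored in the heap are bounded by $N$, each such step can be simulated by $O(1)$ bit operations together with look-ups in a table of size $o(N)$ precomputed in $o(N)$ time. Every heap operation then costs $O(1)$ and the running time drops to $O(|P|+K)$.

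I expect the point needing the most care to be the interaction of the two $O(1)$-time ingredients: the online next-largest query of section~\ref{sec:online}, which must genuinely deliver the colors of a fixed range in globally sorted order, and the bounded-universe priority queue, whose $O(1)$ bound has to be re-derived without multiplication. The no-duplicate guarantee of the Hon et al. reduction is what glues the per-range sorted streams into a single correct global order, so I would state that property precisely before invoking it.
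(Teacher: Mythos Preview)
Your proposal is correct and follows essentially the same argument as the paper: the HSV09 reduction to $|P|$ disjoint ranges of~$\cA$, storing~$\cA$ in the structure of Theorem~\ref{theor:mainspace} with the online mode of section~\ref{sec:online} to stream each range in sorted order, a heap-driven $|P|$-way merge, and for $|P|=\log^{O(1)}N$ the Fredman--Willard/Thorup priority queue with multiplications replaced by $o(N)$-size table look-ups. The only slip is the color/priority assignment: a document's $\rel$-value varies with $P$, so document identifiers cannot be the colors with $\rel$-values as fixed priorities; in the HSV09 array each position carries a fixed precomputed score, and it is these (distinct) values that serve as colors so that the structure behaves as a sorted range reporter---the paper is equally informal on this detail.
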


\section{Conclusions}
In this paper we described a data structure for top-$K$ color reporting 
with optimal query time. The worst-case space usage of our data structure 
is also optimal. This result allows us to report most highly ranked 
documents that contain a query pattern $P$ 
in optimal time using optimal worst-case space. 
While the recent compressed data structure of Hon et al.~\cite{HSV09} 
uses less space it needs $O(\log^{3+\eps} N)$ time to report each document. 
It is an interesting open question, whether we can construct 
a compressed data structure that requires significantly less time to report 
every document.

\end{document}